\title{Kuroda's Translation for the\\ $\lambda\Pi$-Calculus Modulo Theory and Dedukti}
\author{Thomas Traversié
\institute{Université Paris-Saclay, CentraleSupélec, MICS \\ Gif-sur-Yvette, France}
\institute{Université Paris-Saclay, Inria, CNRS, ENS-Paris-Saclay, LMF \\ Gif-sur-Yvette, France}
\email{thomas.traversie@centralesupelec.fr}}
\newtheorem{definition}{Definition}
\newtheorem{lemma}{Lemma}
\newtheorem{proposition}{Proposition}
\newtheorem{theorem}{Theorem}
\newtheorem{example}{Example}
\def\imp{\mathbin{\Rightarrow}}
\def\conj{\mathbin{\wedge}}
\def\disj{\mathbin{\vee}}
\def\ex{{\exists}}
\def\fa{{\forall}}
\def\eqi{\Leftrightarrow}
\def\ra{\rightarrow}
\def\lra{\hookrightarrow}
\def\Type{\mbox{\tt TYPE}}
\def\Kind{\mbox{\tt KIND}}
\def\Set{{\it Set}}
\def\El{{\it El}}
\def\Prop{{\it Prop}}
\def\Prf{{\it Prf}}
\def\imp{\mathbin{\Rightarrow}}
\def\arr{\mathbin{\rightsquigarrow}}
\def\o{o}
\def\T{\mathcal{T}}
\def\R{\mathcal{R}}
\newcommand\ck{{\sc Construkti}\xspace}
\newcommand\dk{{\sc Dedukti}\xspace}
\newcommand{\lpc}{$\lambda \Pi$-calculus\xspace}
\newcommand{\lpcm}{$\lambda \Pi$-calculus modulo theory\xspace}
\definecolor{green}{RGB}{0,130,0}
\definecolor{lightgrey}{RGB}{240,240,240}
\lstdefinelanguage{Dedukti}
{
  inputencoding=utf8,
  extendedchars=true,
  numbers=none,
  numberstyle={},
  tabsize=2,
  basicstyle={\ttfamily\upshape\mdseries},
  backgroundcolor=\color{lightgrey},
  keywords={def,thm,injective},
  sensitive=true,
  keywordstyle=\color{blue},
  morecomment=[s]{(;}{;)},
  commentstyle={\itshape\color{red}},
  string=[b]{"},
  stringstyle=\color{orange},
  showstringspaces=false,
}
\begin{document}

\maketitle

\begin{abstract}
Kuroda's translation embeds classical first-order logic into intuitionistic logic, through the insertion of double negations. Recently, Brown and Rizkallah extended this translation to higher-order logic. In this paper, we adapt it for theories encoded in higher-order logic in the $\lambda\Pi$-calculus modulo theory, a logical framework that extends $\lambda$-calculus with dependent types and user-defined rewrite rules. We develop a tool that implements Kuroda's translation for proofs written in {\sc Dedukti}, a proof language based on the $\lambda\Pi$-calculus modulo theory.
\end{abstract}

\section{Introduction}

The \lpcm~\cite{lambdapi} is an extension of simply typed $\lambda$-calculus with dependent types and user-defined rewrite rules. It is a logical framework, meaning that one can express many theories in it---through the definitions of typed constants and rewrite rules. For instance, it is possible to encode Predicate Logic, Simple Type Theory and the Calculus of Constructions in the \lpcm~\cite{theoryU}. In particular, theories from other proof systems can be expressed inside this logical framework~\cite{thire}. The \lpcm has been implemented in the concrete language \dk~\cite{expressing,deduktiengine}. Besides automatic proof checking, \dk can be used as a common language to exchange proofs between different systems. However, if one wants to translate proofs from the \textit{classical} proof assistant \textsc{HOL Light} to the \textit{intuitionistic} proof assistant \textsc{Coq} \textit{via} \dk, one must transform classical proofs into intuitionistic proofs \textit{inside} \dk.

Classical logic corresponds to intuitionistic logic extended with the principle of excluded middle $A \disj \neg A$, or equivalently the double-negation elimination $\neg\neg A \imp A$. Classical logic can be embedded into intuitionistic logic, using double-negations translations. Glivenko~\cite{glivenko1928} proved that any propositional formula $A$ is provable in classical logic if and only if its double negation $\neg\neg A$ is provable in intuitionistic logic. Kolmogorov~\cite{kolmogorov1925}, Gödel~\cite{godel1933}, Gentzen~\cite{gentzen1936} and Kuroda~\cite{kuroda1951} developed double-negation translations $A \mapsto A^*$, which transforms any first-order formula $A$ such that:
\begin{enumerate}[label=(\roman*)]
\item if $A$ is provable in classical logic then its translation $A^*$ is provable in intuitionistic logic, \label{property_1}
\item $A$ and $A^*$ are classically equivalent. \label{property_2}
\end{enumerate}
More recently, Brown and Rizkallah~\cite{brown_rizkallah} showed that Kolmogorov's and Gödel-Gentzen's translations cannot be extended to higher-order logic. They proved that, in higher-order logic, Kuroda's translation satisfies Property~\ref{property_1}, but that it fails in the presence of functional extensionality. In fact~\cite{kuroda_hol}, Property~\ref{property_1} holds in the presence of functional extensionality under some specific condition, and Property~\ref{property_2} holds when assuming functional extensionality and propositional extensionality.

\paragraph{Contribution.} In this paper, we express Kuroda's translation for theories of the \lpcm that are encoded in higher-order logic. It is both an encoding---into a logical framework that features proofs as terms---and an extension---to a logical framework that features dependent types and user-defined rewrite rules---of Kuroda's translation. We implement such translation inside \ck, a tool that translates \dk files. \ck is tested on a benchmark of a hundred formal proofs. This tool and this benchmark are available at \url{https://github.com/Deducteam/Construkti}.

\paragraph{Outline of the paper.} In \Cref{sec_lpcm}, we present the \lpcm and we detail an encoding of higher-order logic in it. In \Cref{sec_kuroda}, we define Kuroda's translation for theories of \lpcm that are encoded in higher-order logic, and we prove the embedding of classical logic into intuitionistic logic. In \Cref{sec_implem}, we implement \ck and test it on \dk proofs.

\section{Higher-Order Logic in the \texorpdfstring{$\lambda\Pi$}{lambdaPi}-Calculus Modulo Theory}
\label{sec_lpcm}

In this section, we present the \lpcm, and we detail an encoding of higher-order logic in this logical framework. We characterize the theories considered in the rest of this paper---theories encoded in higher-order logic.

\subsection{The \texorpdfstring{$\lambda\Pi$}{lambdaPi}-Calculus Modulo Theory}

The Edinburgh Logical Framework~\cite{LF}, also called \lpc, is an extension of simply typed $\lambda$-calculus with dependent types. The \lpcm~\cite{lambdapi} corresponds to the Edinburgh Logical Framework extended with user-defined rewrite rules~\cite{rewriteSystem}. Its syntax is given by:
\begin{align*}
&\text{\textit{Sorts}} &s &\Coloneqq \Type ~|~ \Kind \\
&\text{\textit{Terms}} &t,u, A, B &\Coloneqq c ~|~ x ~|~ s ~|~ \Pi x : A. ~B ~|~ \lambda x : A. ~t ~|~ t ~u \\
&\text{\textit{Contexts}} &\Gamma &\Coloneqq \langle \rangle ~|~ \Gamma, x : A \\
&\text{\textit{Signatures}} &\Sigma &\Coloneqq \langle \rangle ~|~ \Sigma, c : A \\
&\text{\textit{Rewrite systems}} &\R &\Coloneqq \langle \rangle ~|~ \R, \ell \lra r
\end{align*}
where $c$ is a constant and $x$ is a variable (ranging over disjoint sets). $\Type$ and $\Kind$ are two sorts: terms of type $\Type$ are called types, and terms of type $\Kind$ are called kinds. $\Pi x : A. ~B$ is a dependent product (simply written $A \ra B$ if $x$ does not occur in $B$), $\lambda x : A. ~t$ is an abstraction, and $t ~u$ is an application. Contexts, signatures and rewrite systems are finite sequences, and are written $\langle \rangle$ when empty. Signatures $\Sigma$ are composed of typed constants $c : A$, where $A$ is a closed term (that is a term with no free variables). Rewrite systems $\R$ are composed of rewrite rules $\ell \lra r$, where the head symbol of $\ell$ is a constant. The \lpcm is a logical framework, in which $\Sigma$ and $\R$ are fixed by the users depending on the logic they are working in. The relation $\lra_{\beta\R}$ is generated by $\beta$-reduction and by the rewrite rules of $\R$. The conversion $\equiv_{\beta\R}$ is the reflexive, symmetric, and transitive closure of $\lra_{\beta\R}$.

The typing rules for the \lpcm are given in \Cref{typrules_lpcm}. We write $\vdash \Gamma$ when the context $\Gamma$ is well formed, and $\Gamma \vdash t : A$ when the term $t$ is of type $A$ in the context $\Gamma$. For convenience, $\langle \rangle \vdash t : A$ is simply written $\vdash t : A$. The standard weakening rule is admissible.

\begin{figure}
\begin{mathpar}
\inferrule*[right={[Empty]}]{ }{\vdash \langle \rangle}

\inferrule*[right={[Decl] $x \notin \Gamma$}]{\vdash \Gamma \\ \Gamma \vdash A : s}{\vdash \Gamma, x : A}

\inferrule*[right={[Sort]}]{\vdash \Gamma}{\Gamma \vdash \Type : \Kind}

\inferrule*[right={[Const] $c : A \in \Sigma$}]{\vdash \Gamma \\ \vdash A : s}{\Gamma \vdash c : A}

\inferrule*[right={[Var] $x : A \in \Gamma$}]{\vdash \Gamma}{\Gamma \vdash x : A}

\inferrule*[right={[Prod]}]{\Gamma \vdash A : \Type \\ \Gamma, x : A \vdash B : s}{\Gamma \vdash \Pi x : A. ~B : s}

\inferrule*[right={[Abs]}]{\Gamma \vdash A : \Type \\ \Gamma, x : A \vdash B : s \\ \Gamma, x : A \vdash t : B}{\Gamma \vdash \lambda x : A. ~t : \Pi x : A. ~B}

\inferrule*[right={[App]}]{\Gamma \vdash t : \Pi x : A. ~B \\ \Gamma \vdash u : A}{\Gamma \vdash t\ u : B[x \leftarrow u]}

\inferrule*[right={[Conv] $A \equiv_{\beta\R} B$}]{\Gamma \vdash t : A \\ \Gamma \vdash B : s}{\Gamma \vdash t : B}
\end{mathpar}
\caption{Typing rules of the \lpcm.}
\label{typrules_lpcm}
\end{figure}

We write $\Lambda(\Sigma)$ for the set of terms whose constants belong to $\Sigma$. We say that ($\Sigma, \R)$ is a theory when: $(i)$ for each rule $\ell \lra r \in \R$, both $\ell$ and $r$ belongs to $\Lambda(\Sigma)$, $(ii)$ $\lra_{\beta\R}$ is confluent on $\Lambda(\Sigma)$, and $(iii)$ each rule $\ell \lra r \in \R$ preserves types (for all context $\Gamma$, substitution $\theta$, and term $A \in \Lambda(\Sigma)$, if $\Gamma \vdash \ell\theta : A$ then $\Gamma \vdash r\theta : A$).

In the \lpcm, if $\Gamma \vdash t : A$ then $\Gamma$ is well-formed and $A$ is well-typed. To prove this, we use the two following properties.

\begin{lemma}
If $\Gamma \vdash t : A$, then either $A = \Kind$ or $\Gamma \vdash A : s$ for $s = \Type$ or $s = \Kind$. If $\Gamma \vdash \Pi x : A. ~B : s$, then $\Gamma \vdash A : \Type$.
\end{lemma}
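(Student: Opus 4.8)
The plan is to prove both statements by induction on typing derivations, using the admissible weakening rule together with two standard auxiliary facts that I would establish beforehand. The first is a \emph{substitution lemma}: if $\Gamma, x : A \vdash M : B$ and $\Gamma \vdash N : A$, then $\Gamma \vdash M[x \leftarrow N] : B[x \leftarrow N]$, proved by a routine induction on the first derivation. The second is a \emph{context-formation} property: if $\vdash \Gamma$ and $x : A \in \Gamma$, then $\Gamma \vdash A : s$ for some sort $s$; this is obtained by induction on the derivation of $\vdash \Gamma$, reading off the premise $\Gamma' \vdash A : s$ of the [Decl] step that introduced $x$ and weakening it to the full context $\Gamma$.

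I would prove the second statement first, since it doubles as an inversion step for the first. Proceeding by induction on a derivation of $\Gamma \vdash \Pi x : A.~B : s$, the shape of the subject forces the last rule to be either [Prod] or [Conv], as no other rule concludes with a dependent product in subject position. The [Prod] case yields $\Gamma \vdash A : \Type$ directly from its left premise, while [Conv] leaves the subject unchanged, so the induction hypothesis applies to its premise. At no extra cost in the same induction, I would strengthen the conclusion to a full generation lemma that also returns a sort $s'$ with $\Gamma, x : A \vdash B : s'$ and $s \equiv_{\beta\R} s'$, which is what the application case below requires.

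Turning to the first statement, I do induction on $\Gamma \vdash t : A$ and inspect the last rule. The cases [Sort] (where $A = \Kind$), [Prod] (where $A$ is itself a sort, so either $A = \Kind$ or $\Gamma \vdash \Type : \Kind$ by [Sort]) and [Conv] (where $\Gamma \vdash A : s$ is literally a premise) are immediate. [Const] and [Var] follow by weakening, applied respectively to the signature premise $\vdash A : s$ and to the type produced by the context-formation property. [Abs] is settled by re-deriving $\Gamma \vdash \Pi x : A.~B : s$ with [Prod] from the two premises already at hand.

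The main obstacle is the [App] case: the subject $t~u$ receives type $B[x \leftarrow u]$ from premises $\Gamma \vdash t : \Pi x : A.~B$ and $\Gamma \vdash u : A$. The induction hypothesis applied to the first premise gives $\Gamma \vdash \Pi x : A.~B : s$, the alternative $\Pi x : A.~B = \Kind$ being impossible since a product is not a sort. I then feed this to the strengthened generation lemma to obtain a sort $s'$ with $\Gamma, x : A \vdash B : s'$, and conclude by the substitution lemma with $\Gamma \vdash u : A$ that $\Gamma \vdash B[x \leftarrow u] : s'$, substitution into the sort $s'$ leaving it unchanged. The only delicate bookkeeping I anticipate is tracking the conversion $s \equiv_{\beta\R} s'$ through the [Conv] steps in the generation lemma; everything else is direct rule analysis.
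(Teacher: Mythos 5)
The paper never proves this lemma---it is stated as a known, standard property of the \lpcm---so there is no official argument to compare against. Your overall architecture is the standard one and is sound: a substitution lemma, a context-formation property obtained from the [Decl] premises plus admissible weakening, a generation lemma for products, and then an induction on the typing derivation whose only nontrivial case is [App]. That case you handle exactly right: $\Pi x : A.~B$ cannot equal $\Kind$, generation gives $\Gamma, x : A \vdash B : s'$, and substitution gives $\Gamma \vdash B[x \leftarrow u] : s'$ since $s'[x \leftarrow u] = s'$.

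There is, however, one step that fails as literally written: the [Conv] case of your generation lemma. You claim that ``the induction hypothesis applies to its premise,'' but the premise of [Conv] is a judgment $\Gamma \vdash \Pi x : A.~B : C$ where $C$ is merely \emph{convertible} to the sort $s$; nothing in the rule forces $C$ to be syntactically a sort, so an induction hypothesis quantified over judgments of the form $\Gamma \vdash \Pi x : A.~B : s$ does not apply to it. Strengthening the \emph{conclusion} (returning $s'$ with $s \equiv_{\beta\R} s'$), as you propose, does not repair this; what must be generalized is the \emph{hypothesis}: prove that whenever $\Gamma \vdash \Pi x : A.~B : C$ for an arbitrary term $C$, then $\Gamma \vdash A : \Type$ and there is a sort $s'$ with $\Gamma, x : A \vdash B : s'$ and $C \equiv_{\beta\R} s'$. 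With that formulation the [Conv] case closes by transitivity of $\equiv_{\beta\R}$, and the instance where $C$ is a sort is what the stated lemma and your [App] case need. A strengthening of the same kind is silently required in your substitution lemma: the ``routine induction'' only goes through if the statement carries a suffix context, i.e., from $\Gamma, x : A, \Delta \vdash M : B$ and $\Gamma \vdash N : A$ conclude $\Gamma, \Delta[x \leftarrow N] \vdash M[x \leftarrow N] : B[x \leftarrow N]$, because the binder cases ([Abs], [Prod]) extend the context to the right of $x$. Both fixes are standard, but as written neither of these two auxiliary inductions is self-supporting.
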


\subsection{An Encoding of Higher-Order Logic}

It is possible to express higher-order logic in the \lpcm~\cite{theoryU}. For this, we have to introduce the notions of proposition and proof. We declare the constant $\Set$, which represents the universe of sorts, along with the injection $\El$ that maps sorts to the type of its elements. The constant $\Prop$ defines the universe of propositions, and the injection $\Prf$ maps propositions into the type of its proofs. In this encoding, we say that $P$ of type $\Prop$ is a proposition, that $\Prf ~P$ is a formula and that a term of type $\Prf ~P$ is a proof of $P$.
\begin{flalign*}
&\Set : \Type & &\El : \Set \ra \Type & &\arr : \Set \ra \Set \ra \Set & &\o : \Set &\\
&\Prop : \Type & &\Prf : \Prop \ra \Type & &\El ~(x \arr y) \lra \El ~x \ra \El ~y & &\El ~\o \lra \Prop &
\end{flalign*}
The arrow $\arr$ (written infix) is used to represent function types between terms of type $\Set$. Propositions are considered as objects, using the sort $\o$ and the rewrite rule $\El ~\o \lra \Prop$.

Now that we have introduced the notions of proposition and proof, we can define the logical connectives and quantifiers of predicate logic.
\begin{flalign*}
&\imp : \Prop \ra \Prop \ra \Prop & &\top : \Prop & &\fa : \Pi x : \Set. ~(\El ~x \ra \Prop) \ra \Prop &\\
&\conj : \Prop \ra \Prop \ra \Prop  & &\bot : \Prop & &\ex : \Pi x : \Set. ~(\El ~x \ra \Prop) \ra \Prop &\\
&\disj : \Prop \ra \Prop \ra \Prop & &\neg : \Prop \ra \Prop & &\eqi ~: \Prop \ra \Prop \ra \Prop &
\end{flalign*}
Remark that $\fa$ and $\ex$ are polymorphic quantifiers that can be applied to the sort of proposition $o$. Hence the higher-order feature directly derives from the rewrite rule $\El ~\o \lra \Prop$.

In natural deduction, each connective and quantifier comes with an introduction and an elimination inference rule. The encoding of the notions of proposition and proof is well-suited for representing inference rules: logical consequences are represented by arrow types, and parameters are represented by dependent types. For instance, the inference rule for the elimination of disjunction
\begin{mathpar}
\inferrule*{\Gamma \vdash P \disj Q \\ \Gamma, P \vdash R \\ \Gamma, Q \vdash R}{\Gamma \vdash R}
\end{mathpar}
is simply expressed by the constant $\mathsf{or_e}$ of type 
$$\Pi p,q : \Prop. ~\Prf ~(p \disj q) \ra \Pi r : \Prop. ~(\Prf ~p \ra \Prf ~r) \ra (\Prf ~q \ra \Prf ~r) \ra \Prf ~r$$
that can be used for any context $\Gamma$. The constants representing the natural deduction rules for the logical connectives are:
\begin{flalign*}
&\mathsf{imp_i} : \Pi p,q : \Prop. ~(\Prf ~p \ra \Prf ~q) \ra \Prf ~(p \imp q) &\\
&\mathsf{imp_e} : \Pi p,q : \Prop. ~\Prf ~(p \imp q) \ra \Prf ~p \ra \Prf ~q & \\
&\mathsf{and_i} : \Pi p : \Prop. ~\Prf ~p \ra \Pi q : \Prop. ~\Prf ~q \ra \Prf ~(p \conj q) & \\
&\mathsf{and_{e\ell}} : \Pi p,q : \Prop. ~\Prf ~(p \conj q) \ra \Prf ~p & \\
&\mathsf{and_{er}} : \Pi p,q : \Prop. ~\Prf ~(p \conj q) \ra \Prf ~q & \\
&\mathsf{or_{i\ell}} : \Pi p : \Prop. ~\Prf ~p \ra \Pi q : \Prop. ~\Prf ~(p \disj q) & \\
&\mathsf{or_{ir}} : \Pi p,q : \Prop. ~\Prf ~q \ra \Prf ~(p \disj q) & \\
&\mathsf{or_e} : \Pi p,q : \Prop. ~\Prf ~(p \disj q) \ra \Pi r : \Prop. ~(\Prf ~p \ra \Prf ~r) \ra (\Prf ~q \ra \Prf ~r) \ra \Prf ~r & \\
&\mathsf{neg_i} : \Pi p : \Prop. ~(\Prf ~p \ra \Prf ~\bot) \ra \Prf ~(\neg p) & \\
&\mathsf{neg_e} : \Pi p : \Prop. ~\Prf ~(\neg p) \ra \Prf ~p \ra \Prf ~\bot & 
\end{flalign*}
For convenience, the semantic of the logical biconditional is encoded through the rewrite rule $p \eqi q \lra (p \imp q) \conj (q \imp p)$. The introduction of tautology and the elimination of contradiction are encoded by: 
\begin{flalign*}
&\mathsf{top_i} : \Prf ~\top & \\
&\mathsf{bot_e} : \Prf ~\bot \ra \Pi p : \Prop. ~\Prf ~p &
\end{flalign*}
The natural deduction rules for the quantifiers are represented by the following constants:
\begin{flalign*}
&\mathsf{all_i} : \Pi a : \Set. ~\Pi p : \El ~a \ra \Prop. ~(\Pi x : \El ~a. ~\Prf ~(p ~x)) \ra \Prf ~(\fa ~a ~p) & \\
&\mathsf{all_e} : \Pi a : \Set. ~\Pi p : \El ~a \ra \Prop. ~\Prf ~(\fa ~a ~p) \ra \Pi x : \El ~a. ~\Prf ~(p ~x) & \\
&\mathsf{ex_i} : \Pi a : \Set. ~\Pi p : \El ~a \ra \Prop. ~\Pi x : \El ~ a. ~\Prf ~(p ~x) \ra \Prf ~(\ex ~a ~p) & \\
&\mathsf{ex_e} : \Pi a : \Set. ~\Pi p : \El ~a \ra \Prop. ~\Prf ~(\ex ~a ~p) \ra \Pi r : \Prop. ~(\Pi x : \El ~a. ~\Prf ~(p ~x) \ra \Prf ~r) \ra \Prf ~r & 
\end{flalign*}
All those constants and rewrite rules define the encoding of intuitionistic higher-order logic in the \lpcm. We write $\Sigma_{HOL}^i$ for its constants and $\R_{HOL}$ for its rewrite rules. The principle of excluded middle is represented by:
\begin{flalign*}
&\mathsf{pem} : \Pi p : \Prop. ~\Prf ~(p \disj \neg p) & 
\end{flalign*}
Classical higher-order logic is encoded in the \lpcm by the constants $\Sigma_{HOL}^c$ (that is $\Sigma_{HOL}^i$ along with $\mathsf{pem}$) and by the rewrite rules $\R_{HOL}$.

Remark that we have decided to encode the natural deduction rules via \textit{typed constants}, while they are often expressed via \textit{rewrite rules} in the \lpcm~\cite{theoryU}. For instance, both the introduction and the elimination of implication can be derived from the rewrite rule $\Prf ~(p \imp q) \lra \Prf ~p \ra \Prf ~q$. So as to perform the translation from classical logic to intuitionistic logic, the natural deduction steps must be \textit{explicit} deduction steps, and cannot be \textit{implicit} computation steps. That is why we encode the natural deduction rules with a deep embedding---via typed constants---instead of a shallow embedding---via rewrite rules.

\subsection{Theories Encoded in Higher-Order Logic}

When working with the encoding of higher-order logic in the \lpcm, it is possible to mix sorts, propositions and proofs---which is not expected in higher-order logic. For example, propositions can be inserted in sorts when we have a term of type $\Prop \ra \Set$, and proofs can be inserted in propositions when we have a term of type $\Pi p : \Prop. ~\Prf ~p \ra \Prop$. To avoid such behavior, we introduce five grammars:
\begin{align*}
\kappa_1 &\Coloneqq \Set \mid \kappa_1 \ra \kappa_1 \\
\kappa_2 &\Coloneqq \Prop \mid \El ~a \mid \Pi x : \kappa_i. ~\kappa_2 \text{ with $i \in \{ 1, 2 \}$} \\
\kappa_3 &\Coloneqq \Prf ~p \mid \kappa_3 \ra \kappa_3 \mid \Pi x : \kappa_i. ~\kappa_3 \text{ with $i \in \{ 1, 2 \}$} \\
\kappa_4 &\Coloneqq \Type \mid \Pi x : \kappa_i. ~\kappa_4 \text{ with $i \in \{ 1, 2 \}$} \\
\kappa_5 &\Coloneqq \Kind
\end{align*}
The grammar $\kappa_3$ generates formulas and inference rules. The grammar $\kappa_4$ generates a subclass of kinds, and $\kappa_5$ only generates $\Kind$. We characterize the judgments of the \lpcm to ensure that types and kinds are generated by one of those grammars.

\begin{definition}[$\kappa$-property]
The judgment $\Gamma \vdash t : A$ satisfies the $\kappa$-property when $A \in \kappa_i$ for some $i \in \llbracket 1, 5 \rrbracket$. The judgment $\vdash \Gamma$ satisfies the $\kappa$-property when for each $(x : A) \in \Gamma$ we have $A \in \kappa_i$ for some $i \in \llbracket 1, 5 \rrbracket$. A derivation satisfies the $\kappa$-property when each of its judgments satisfies the $\kappa$-property.
\end{definition}
Theories encoded in higher-order logic are theories that feature the base higher-order encoding and in which the user-defined constants satisfy the $\kappa$-property.

\begin{definition}[Theory encoded in higher-order logic]
Let $\T = (\Sigma, \R)$ be a theory in the \lpcm. $\T$ is encoded in higher-order logic when:
\begin{enumerate}
    \item $\Sigma = \Sigma_{HOL}^k \cup \Sigma_{\T}$ with $k \in \{ i, c \}$ and $\Sigma_{HOL} \cap \Sigma_{\T} = \emptyset$,
    \item $\R = \R_{HOL} \cup \R_{\T}$ with $\R_{HOL} \cap \R_{\T} = \emptyset$,
    \item for every $c : A \in \Sigma_{\T}$, the judgment $\vdash c : A$ satisfies the $\kappa$-property,
    \item for every $\ell \lra r \in \R_{\T}$, $\ell$ is neither $\Prf$ nor $\fa$.
\end{enumerate}
\end{definition}
The fourth condition will ensure that the translation of a rewrite rule is a well-defined rewrite rule. Theories encoded in higher-order logic extend higher-order logic with user-defined rewrite rules and inference rules. The introduction of rewrite rules is part and parcel of deduction modulo theory~\cite{deduction_modulo}, while the introduction of inference rules has been developed in superdeduction modulo theory~\cite{superdeduction,superdeduction_phd}.

When considering a theory encoded in higher-order logic, all the user-defined constants satisfy the $\kappa$-property. In that respect, the only way to mix sorts, propositions and proofs is through $\lambda$-abstractions. For instance, $(\lambda P : \Prop. ~\o)$ is a term taking as input a proposition and returning a sort. The type $\El ~((\lambda P : \Prop. ~\o) ~\bot)$ mixes propositions and sorts, but it is $\beta$-convertible to $\El ~\o$, in which no proposition occurs. Using this principle, we can transform every derivation of a theory encoded in higher-order logic into a derivation that satisfies the $\kappa$-property, by applying $\beta$-reduction on fragments of the derivation. When a derivation satisfies the $\kappa$-property, the rewrite rules $\ell \lra r$ with $\ell$ and $r$ of type $A \in \kappa_3$ cannot be used. In the rest of this paper and without loss of generality, we only consider derivations that satisfy the $\kappa$-property and rewrite rules $\ell \lra r$ with $\ell$ and $r$ of type $A \in \kappa_i$ for $i \neq 3$.

\begin{example}[Equational theory]
\label{ex_equality}
Consider the theory $\T = (\Sigma_{HOL} \cup \Sigma_{eq}, \R_{HOL} \cup \R_{eq})$, with a polymorphic equality symbol $= ~: \Pi a : \Set. ~\El ~a \ra \El ~a \ra \Prop$, and a rewrite rule for the Leibniz principle $\Prf ~(= a ~x ~y) \lra \Pi P : \El ~a \ra \Prop. ~\Prf ~(P ~x) \ra \Prf ~(P ~y)$. This theory is encoded in higher-order logic. We can prove that the equality is reflexive, symmetric and transitive. For instance, the proof of reflexivity is given by $\lambda a : \Set. ~\mathsf{all_i} ~a ~(\lambda x : \El ~a. = a ~x ~x) ~(\lambda x : \El ~a. ~\lambda P : \El ~a \ra \Prop. ~\lambda P_x : \Prf ~(P ~x). ~P_x)$ which is of type $\Pi a : \Set. ~\Prf ~(\fa ~a ~(\lambda x : \El ~a. = a ~x ~x))$. 
\end{example}

\section{Kuroda's Translation in the \texorpdfstring{$\lambda\Pi$}{lambdaPi}-Calculus Modulo Theory}
\label{sec_kuroda}

In this section, we adapt Kuroda's double-negation translation to the \lpcm, when working in theories encoded in higher-order logic. Kuroda's translation~\cite{kuroda1951} inserts a double negation in front of formulas and one after every universal quantifier. More formally, we have $A^{Ku} \coloneqq \neg\neg A_{Ku}$ where $A_{Ku}$ is defined by induction:
\[
  \begin{array}{llll}
        (A \imp B)_{Ku} \coloneqq A_{Ku} \imp B_{Ku} &(\neg A)_{Ku} \coloneqq \neg A_{Ku} &P_{Ku} \coloneqq P \text{ if $P$ atomic} \\
        (A \conj B)_{Ku} \coloneqq A_{Ku} \conj B_{Ku} &\top_{Ku} \coloneqq \top &(\fa x ~A)_{Ku} \coloneqq \fa x ~\neg\neg A_{Ku} \\
        (A \disj B)_{Ku} \coloneqq A_{Ku} \disj B_{Ku} &\bot_{Ku} \coloneqq \bot &(\ex x ~A)_{Ku} \coloneqq \ex x ~A_{Ku} \\
  \end{array}
\]
This translation embeds classical logic into intuitionistic logic, as for any first-order formula $A$ we have $\Gamma \vdash A$ in classical logic if and only if $\Gamma^{Ku} \vdash A^{Ku}$ in intuitionistic logic.

\subsection{Translation of Terms and Theories}

When working inside a theory encoded in higher-order logic in the \lpcm, every formula has head symbol $\Prf$. Inserting a double negation in front of every formula is therefore equivalent to inserting it after every $\Prf$ symbol. In that respect, we define a single translation $t \mapsto t^{Ku}$ by induction on the terms of the \lpcm. The translation of $\Prf$ is $\lambda p. ~\Prf ~(\neg\neg p)$, and the translation of the universal quantifier $\fa$ is $\lambda a. ~\lambda p. ~\fa ~a ~(\lambda z. ~\neg\neg (p ~z))$. The translation of $\lambda$-abstraction $\lambda x : A. ~t$ is naturally given by $\lambda x : A^{Ku}. ~t^{Ku}$, the one of dependent type $\Pi x : A. ~B$ is given by $\Pi x : A^{Ku}. ~B^{Ku}$ and the one of application $t ~u$ is defined by $t^{Ku} ~u^{Ku}$. 

As we are in the \lpcm with the \textit{proofs-as-terms} paradigm, we have to translate proofs as well. Kuroda's translation relies on the fact that the translation of each natural deduction rule is admissible in intuitionistic logic. For instance, the introduction of implication allows to derive $\Gamma \vdash P \imp Q$ from $\Gamma, P \vdash Q$. In intuitionistic logic, $\Gamma^{Ku} \vdash (P \imp Q)^{Ku}$ is derivable from $\Gamma^{Ku}, P^{Ku} \vdash Q^{Ku}$. In the \lpcm, the \textit{constant} $\mathsf{imp_i}$ is of type $\Pi p,q : \Prop. ~(\Prf ~p \ra \Prf ~q) \ra \Prf ~(p \imp q)$, and we can build a \textit{term} $\mathsf{imp_i}^i$ of type $\Pi p,q : \Prop. ~(\Prf ~\neg\neg p \ra \Prf ~\neg\neg q) \ra \Prf ~\neg\neg (p \imp q)$, that only depends on the constants representing \textit{intuitionistic} natural deduction rules. Each constant $c$ of type $A$ representing a natural deduction rule is translated by the term $c^i$ of type $A^{Ku}$, where $c^i$ is an intuitionistic proof term of $A^{Ku}$.

\begin{definition}[Translation of terms]
Kuroda's translation is inductively defined on the terms of the \lpcm by:
\[
  \begin{array}{l}
        x^{Ku} \coloneqq x \\
        c^{Ku} \coloneqq \left\{
        \begin{array}{ll}
            \lambda p. ~\Prf ~(\neg\neg p) &\text{ if $c = \Prf$} \\
            \lambda x. ~\lambda p. ~\fa ~x ~(\lambda z. ~\neg\neg (p ~z)) &\text{ if $c = \fa$} \\
            c^i &\text{ if $c$ is a constant representing a natural deduction rule} \\
            c &\text{ otherwise} \\
        \end{array}
        \right.\\
        s^{Ku} \coloneqq s \\
        (\lambda x : A. ~t)^{Ku} \coloneqq \lambda x : A^{Ku}. ~t^{Ku} \\
        (\Pi x : A. ~B)^{Ku} \coloneqq \Pi x : A^{Ku}. ~B^{Ku} \\
        (t ~u)^{Ku} \coloneqq t^{Ku} ~u^{Ku} \\
  \end{array}
\]
\end{definition}

\begin{proposition}
\label{prop_const_Ku}
For every constant $c : A \in \Sigma_{HOL}$ representing a natural deduction rule, we have $\vdash c^i : A^{Ku}$ in the theory $(\Sigma_{HOL}^i, \R_{HOL})$.
\end{proposition}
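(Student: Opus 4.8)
\section*{Proof proposal}

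The plan is to prove the statement by a case analysis on the constant $c$, producing for each natural deduction rule an explicit intuitionistic proof term $c^i$ of the translated type. First I would simplify the target type $A^{Ku}$. Since $\Prf^{Ku} = \lambda p.~\Prf~(\neg\neg p)$ and $\fa^{Ku} = \lambda a.~\lambda p.~\fa~a~(\lambda z.~\neg\neg(p~z))$, while the remaining constants occurring in the types of the rules ($\imp, \conj, \disj, \neg, \bot, \top, \ex, \El, \Prop$) are fixed by the translation, every $A^{Ku}$ is $\beta$-convertible to the type obtained from $A$ by replacing each formula $\Prf~P$ with $\Prf~(\neg\neg P)$ and each quantified body $\fa~a~p$ with $\fa~a~(\lambda z.~\neg\neg(p~z))$. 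By the [Conv] rule it therefore suffices to build $c^i$ with this simplified type; for example $\mathsf{imp_i}^i$ must inhabit $\Pi p,q:\Prop.~(\Prf~(\neg\neg p)\ra\Prf~(\neg\neg q))\ra\Prf~(\neg\neg(p\imp q))$.

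Next I would assemble a small toolkit of auxiliary proof terms, each definable from the constants of $\Sigma_{HOL}^i$ alone, expressing that $\neg\neg(-)$ behaves like a monad on propositions: a unit $\Prf~p\ra\Prf~(\neg\neg p)$; functoriality $(\Prf~p\ra\Prf~q)\ra\Prf~(\neg\neg p)\ra\Prf~(\neg\neg q)$; the collapses $\Prf~(\neg\neg\neg p)\ra\Prf~(\neg p)$ and $\Prf~(\neg\neg\bot)\ra\Prf~\bot$; and monadic elimination (``bind'') lemmas for disjunction and existential, of the shape $\Prf~(\neg\neg(p\disj q))\ra(\Prf~p\ra\Prf~(\neg\neg r))\ra(\Prf~q\ra\Prf~(\neg\neg r))\ra\Prf~(\neg\neg r)$ and analogously for $\ex$. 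Each of these is a routine intuitionistic derivation: the unit is $\mathsf{neg_i}$ applied to an instance of $\mathsf{neg_e}$; the collapses follow from $\mathsf{neg_i}$ and $\mathsf{neg_e}$; and the bind lemmas come from $\mathsf{or_e}$ respectively $\mathsf{ex_e}$ composed with the unit.

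With the toolkit in hand I would treat the constants case by case. The introduction and elimination rules for the propositional connectives are assembled from these lemmas: for instance $\mathsf{imp_i}^i$ receives $f:\Prf~(\neg\neg p)\ra\Prf~(\neg\neg q)$ and returns a proof of $\neg\neg(p\imp q)$ obtained by assuming $\neg(p\imp q)$, deriving $\neg\neg p$, feeding it to $f$ to get $\neg\neg q$, and reaching $\bot$ with a second use of the hypothesis; $\mathsf{bot_e}^i$ applies the collapse $\Prf~(\neg\neg\bot)\ra\Prf~\bot$ followed by $\mathsf{bot_e}$; $\mathsf{top_i}^i$ is the unit applied to $\mathsf{top_i}$; and $\mathsf{or_e}^i$, $\mathsf{neg_i}^i$, $\mathsf{neg_e}^i$ use the bind and collapse lemmas. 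For the quantifiers, $\mathsf{all_i}^i$ feeds the hypothesis $\Pi x:\El~a.~\Prf~(\neg\neg(p~x))$ directly to $\mathsf{all_i}$ at the predicate $\lambda z.~\neg\neg(p~z)$ and then applies the unit; $\mathsf{all_e}^i$ applies $\mathsf{all_e}$ at $\lambda z.~\neg\neg(p~z)$ underneath a double negation via functoriality and collapses the resulting quadruple negation; $\mathsf{ex_i}^i$ and $\mathsf{ex_e}^i$ are analogous, using the existential constants together with the bind lemma.

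The main obstacle I anticipate is the bookkeeping for the quantifier rules, precisely because Kuroda's translation inserts the extra double negation \emph{after} $\fa$: in $\mathsf{all_e}^i$ and $\mathsf{ex_e}^i$ one must thread the double negations through the binder $\lambda z.~\neg\neg(p~z)$ and through the monadic structure so that the instances of $\mathsf{all_e}$ and $\mathsf{ex_e}$ apply at exactly the translated predicate, after which the accumulated negations are collapsed to the required shape. A secondary, purely technical point to check with care is that at each step the $\beta\R$-conversions identifying $(\Prf~P)^{Ku}$ with $\Prf~(\neg\neg P)$ and $(\fa~a~p)^{Ku}$ with $\fa~a~(\lambda z.~\neg\neg(p~z))$ are correctly discharged by [Conv], so that each $c^i$ genuinely has type $A^{Ku}$ and not merely a convertible variant, and that no appeal to $\mathsf{pem}$ ever creeps in, keeping the whole construction inside $(\Sigma_{HOL}^i, \R_{HOL})$.
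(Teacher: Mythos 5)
Your strategy coincides with the paper's: the paper proves this proposition simply by exhibiting explicit intuitionistic proof terms $c^i$ for each constant (formalized in a \dk file, with $\mathsf{top_i}^i$ shown as an example), and your double-negation-monad toolkit (unit, functoriality, the collapses, and the bind lemmas for $\disj$ and $\ex$) is exactly the systematic way those terms are built. The individual constructions you sketch are all intuitionistically valid --- your $\mathsf{top_i}^i$ (unit applied to $\mathsf{top_i}$) is literally the term the paper displays --- and your care about discharging the $\beta$-conversions between $A^{Ku}$ and its reduced form, and about never invoking $\mathsf{pem}$, is well placed.

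There is, however, one genuine omission: you never construct $\mathsf{pem}^i$. The paper's proof of the translation-of-judgments theorem invokes this very proposition to replace the classical axiom $\mathsf{pem} : \Pi p : \Prop.~\Prf~(p \disj \neg p)$ by an intuitionistic term $\mathsf{pem}^i : \Pi p : \Prop.~\Prf~(\neg\neg(p \disj \neg p))$, so $\mathsf{pem}$ must be among the constants this proposition covers (the translation starts from $\Sigma_{HOL}^c$, and classical source proofs do use it). This is in fact the crucial case: it is the only place where classicality has to be eliminated, and it is precisely why Kuroda's translation works at all --- every rule you do treat is ``routine'' monadic bookkeeping exactly because none of it is classical. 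The missing term is easy to produce with the tools you already have (assume $\neg(p \disj \neg p)$; from a hypothetical $p$ derive $p \disj \neg p$ by $\mathsf{or_{i\ell}}$ and a contradiction, hence $\neg p$ by $\mathsf{neg_i}$; then $p \disj \neg p$ by $\mathsf{or_{ir}}$, contradicting the assumption; conclude with $\mathsf{neg_i}$ again), and it lives entirely in $(\Sigma_{HOL}^i, \R_{HOL})$ --- but without it your case analysis does not support the use the paper makes of the proposition.
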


\begin{proof}
We have formalized the proof terms $c^i$ in \dk\footnote{See \url{https://github.com/Deducteam/Construkti/blob/master/kuroda.dk}.}. For instance, $\mathsf{top_i}^i$ is given in \Cref{sec_implem}. 
\end{proof}

As we are not mixing sorts, propositions and proofs, we know that the symbol $\fa$, the symbol $\Prf$ and the constants representing the natural deduction rules only occur in the grammar $\kappa_3$. Therefore, any type $A \in \kappa_i$ is modified by Kuroda's translation for $i = 3$, whereas $A^{Ku} = A$ for $i \neq 3$.

We have defined the translation for terms, and we now want to define it for theories. Intuitively, we would like to translate a rewrite rule $\ell \lra r$ by $\ell^{Ku} \lra r^{Ku}$. However, if the head constant of $\ell$ is $\Prf$ or $\fa$, then the head symbol of $\ell^{Ku}$ is $\Prf^{Ku}$ or $\fa^{Ku}$, that is a $\lambda$-abstraction and not a constant. Hence $\ell^{Ku} \lra r^{Ku}$ may not be a valid rewrite rule in the \lpcm. We write $\lfloor \ell^{Ku} \rfloor$ for the term obtained by $\beta$-reducing the head symbol of $\ell^{Ku}$ if it is $\Prf^{Ku}$ or $\fa^{Ku}$.
\begin{definition}
The translation $t \mapsto t^{Ku}$ is extended to contexts, signatures and rewrite systems by:
\[
  \begin{array}{l}
  		\langle \rangle^{Ku} \Coloneqq \langle \rangle \\
        (\Gamma, x : A)^{Ku} \coloneqq \Gamma^{Ku}, x : A^{Ku} \\
        (\Sigma, c : A)^{Ku} \coloneqq \Sigma^{Ku}, c : A^{Ku} \\
        (\R, \ell \lra r)^{Ku} \coloneqq \R^{Ku}, \lfloor \ell^{Ku} \rfloor \lra r^{Ku} \\
  \end{array}
\]
\end{definition}
When translating a theory encoded in higher-order logic, we replace $\Sigma_{HOL}^c$ by $\Sigma_{HOL}^i$, and we translate the user-defined signature $\Sigma_{\T}$ and rewrite system $\R_{\T}$.

\begin{definition}[Translation of theories]
Let $\T = (\Sigma_{HOL}^c \cup \Sigma_{\T}, \R_{HOL} \cup \R_{\T})$ be a theory encoded in higher-order logic. The translation of $\T$ is $\T^{Ku} = (\Sigma_{HOL}^i \cup \Sigma_{\T}^{Ku}, \R_{HOL} \cup \R_{\T}^{Ku})$.
\end{definition}
Remark that $\T^{Ku}$ is a theory. Specifically, rewrite rules $\lfloor \ell^{Ku} \rfloor \lra r^{Ku} \in \R_{\T}^{Ku}$ are always well-defined, since $\ell$ is neither $\Prf$ nor $\fa$, and by definition of $\lfloor \ell^{Ku} \rfloor$.

\subsection{Embedding Classical Logic into Intuitionistic Logic}

We aim at proving that the extension of Kuroda's translation in the \lpcm indeed embeds classical logic into intuitionistic logic. In other words, we want to show that  $\Gamma \vdash t : A$ in $\T$ entails $\Gamma^{Ku} \vdash t^{Ku} : A^{Ku}$ in $\T^{Ku}$. To do so, we translate the derivations step by step. In particular, when the \textsc{Conv} rule is used with $A \equiv_{\beta\R} B$ in $\T$, we want to have $A^{Ku} \equiv_{\beta\R} B^{Ku}$ in $\T^{Ku}$.

\begin{lemma}[Translation of substitutions]
\label{subst_lp_Ku}
$(t[z \leftarrow w])^{Ku} = t^{Ku}[z \leftarrow w^{Ku}]$
\end{lemma}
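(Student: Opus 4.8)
The plan is to prove the substitution lemma by structural induction on the term $t$, following exactly the inductive clauses that define Kuroda's translation. The statement $(t[z \leftarrow w])^{Ku} = t^{Ku}[z \leftarrow w^{Ku}]$ is a purely syntactic equality, so no typing hypotheses are needed and the argument is a routine commutation of two recursively-defined operations. First I would set up the induction and dispatch the two base cases for variables. When $t$ is the variable $z$ itself, the left-hand side is $w^{Ku}$ and the right-hand side is $z^{Ku}[z \leftarrow w^{Ku}] = z[z \leftarrow w^{Ku}] = w^{Ku}$, so they agree; when $t$ is a variable $x \neq z$, both sides reduce to $x$, again using $x^{Ku} = x$.

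Next I would handle the remaining base cases, namely sorts and constants. For a sort $s$ we have $s^{Ku} = s$ and $s$ contains no free variable $z$, so both sides equal $s$. The constant case is the one point deserving genuine care: the clause defining $c^{Ku}$ is a four-way case split ($c = \Prf$, $c = \fa$, $c$ a natural-deduction rule, or otherwise), and I must check that substitution commutes in each. The crucial observation is that in every branch $c^{Ku}$ is a \emph{closed} term (the abstractions $\lambda p.\,\Prf~(\neg\neg p)$ and $\lambda x.\,\lambda p.\,\fa~x~(\lambda z.\,\neg\neg(p~z))$ have no free variables, the proof terms $c^i$ are closed by \Cref{prop_const_Ku} since $\vdash c^i : A^{Ku}$, and $c$ itself is closed). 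Since constants carry no free variable $z$, the left-hand side is simply $c^{Ku}$ and the right-hand side is $c^{Ku}[z \leftarrow w^{Ku}] = c^{Ku}$, the substitution being vacuous on a closed term.

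The inductive cases for application, abstraction, and dependent product then follow mechanically by unfolding both the definition of substitution and the definition of $(\cdot)^{Ku}$ and applying the induction hypotheses to the immediate subterms. For application, $(t~u)[z \leftarrow w] = (t[z\leftarrow w])~(u[z\leftarrow w])$, whose translation is $(t[z\leftarrow w])^{Ku}~(u[z\leftarrow w])^{Ku}$, and the two induction hypotheses rewrite this to $t^{Ku}[z\leftarrow w^{Ku}]~u^{Ku}[z\leftarrow w^{Ku}] = (t^{Ku}~u^{Ku})[z\leftarrow w^{Ku}] = (t~u)^{Ku}[z\leftarrow w^{Ku}]$. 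The abstraction and product cases are identical in spirit, with the binder handled under the usual $\alpha$-conversion convention that the bound variable is chosen distinct from $z$ and from the free variables of $w$; this is what lets substitution pass under the binder on both sides and what ensures $A^{Ku}$ in the translated binder $\lambda x : A^{Ku}$ interacts correctly.

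The only step requiring real attention is the constant case, and the main obstacle there is simply to verify that the substituted proof terms $c^i$ and the $\lambda$-abstractions assigned to $\Prf$ and $\fa$ are genuinely closed, so that substitution acts as the identity on them; once closedness is granted, every branch collapses to the trivial vacuous-substitution equality. All other cases are immediate from the induction hypotheses, so I expect no essential difficulty beyond bookkeeping about the binder convention.
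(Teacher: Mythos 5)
Your proof is correct and follows essentially the same route as the paper's: structural induction on $t$, with the variable cases split on $x = z$ versus $x \neq z$, the constant and sort cases handled by observing that $c^{Ku}$ and $s^{Ku}$ are closed terms (so the substitution is vacuous), and the abstraction, product, and application cases discharged by the induction hypotheses. Your extra care in checking closedness branch by branch in the definition of $c^{Ku}$ (including that the $c^i$ are closed since $\vdash c^i : A^{Ku}$) is a more explicit rendering of the paper's one-line justification, not a different argument.
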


\begin{proof}
By induction on the term $t$. We have $(c[z \leftarrow w])^{Ku} = c^{Ku} = c^{Ku}[z \leftarrow w^{Ku}]$ since $c^{Ku}$ is a closed term. Similarly, $(s[z \leftarrow w])^{Ku} = s^{Ku} = s^{Ku}[z \leftarrow w^{Ku}]$. If $x \neq z$, then $(x[z \leftarrow w])^{Ku} = x^{Ku} = x^{Ku}[z \leftarrow w^{Ku}]$. If $x = z$, then $(x[z \leftarrow w])^{Ku} = w^{Ku} = x[z \leftarrow w^{Ku}] = x^{Ku}[z \leftarrow w^{Ku}]$. The cases for $\lambda$-abstractions, dependent types, and applications follow from the induction hypotheses.
\end{proof}

\begin{lemma}[Translation of conversions]
\label{conv_lp_Ku}
If $A \equiv_{\beta\R} B$ in $\T$, then $A^{Ku} \equiv_{\beta\R} B^{Ku}$ in $\T^{Ku}$.
\end{lemma}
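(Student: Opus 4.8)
The plan is to reduce the statement to single reduction steps and then close under the equivalence. Since $\equiv_{\beta\R}$ is the reflexive, symmetric and transitive closure of $\lra_{\beta\R}$, it is enough to prove that each one-step reduction $A \lra_{\beta\R} B$ in $\T$ gives $A^{Ku} \lra_{\beta\R} B^{Ku}$ in $\T^{Ku}$: the general case then follows by translating every step of a conversion zig-zag, using that $\equiv_{\beta\R}$ in $\T^{Ku}$ is itself reflexive, symmetric and transitive. As a preliminary I would first upgrade \Cref{subst_lp_Ku} from a single substitution to an arbitrary one, namely $(t\theta)^{Ku} = t^{Ku}\theta^{Ku}$ where $\theta^{Ku}$ sends each variable $x$ to $(\theta x)^{Ku}$; this is the same induction as \Cref{subst_lp_Ku} and is exactly what is needed to translate instances of rewrite rules.

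For the single-step statement I would argue by induction on the way $A \lra_{\beta\R} B$ is derived, that is, on the position of the contracted redex. The congruence cases are immediate, because $t \mapsto t^{Ku}$ commutes with the three term constructors, $(t\,u)^{Ku} = t^{Ku}\,u^{Ku}$, $(\lambda x : A.\,t)^{Ku} = \lambda x : A^{Ku}.\,t^{Ku}$ and $(\Pi x : A.\,B)^{Ku} = \Pi x : A^{Ku}.\,B^{Ku}$: a redex occurring inside a subterm is sent to a redex in the corresponding translated subterm, so the induction hypothesis applies. The $\beta$ base case follows from the substitution lemma, since $((\lambda x : A.\,t)\,u)^{Ku} = (\lambda x : A^{Ku}.\,t^{Ku})\,u^{Ku} \lra_{\beta} t^{Ku}[x \leftarrow u^{Ku}] = (t[x \leftarrow u])^{Ku}$.

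The rewrite-rule base case is what I expect to be the main obstacle, and it splits according to whether the rule lies in $\R_{HOL}$ or in $\R_{\T}$. If $\ell \lra r \in \R_{HOL}$, then $\Prf$, $\fa$ and the natural-deduction constants do not occur in $\ell$ or $r$, so $\ell^{Ku} = \ell$ and $r^{Ku} = r$, and the rule still belongs to the rewrite system $\R_{HOL} \cup \R_{\T}^{Ku}$ of $\T^{Ku}$; thus $(\ell\theta)^{Ku} = \ell\,\theta^{Ku} \lra r\,\theta^{Ku} = (r\theta)^{Ku}$. If $\ell \lra r \in \R_{\T}$, its image in $\T^{Ku}$ is $\lfloor \ell^{Ku} \rfloor \lra r^{Ku}$, and here I would use that the head constant $c$ of $\ell$ satisfies $c^{Ku} = c$: by the fourth condition defining theories encoded in higher-order logic, $c$ is neither $\Prf$ nor $\fa$, and since we only consider rules whose type is not in $\kappa_3$, $c$ cannot be a natural-deduction constant either (a term headed by such a constant has type in $\kappa_3$). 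Consequently the head of $\ell^{Ku}$ is the constant $c$, no head $\beta$-reduction occurs, $\lfloor \ell^{Ku} \rfloor = \ell^{Ku}$, and the translated rule is simply $\ell^{Ku} \lra r^{Ku}$; the generalized substitution lemma then gives $(\ell\theta)^{Ku} = \ell^{Ku}\theta^{Ku} \lra r^{Ku}\theta^{Ku} = (r\theta)^{Ku}$. This is precisely where the definition of $\lfloor \cdot \rfloor$ and condition~4 are used: together they guarantee that the translated left-hand side is still a constant-headed pattern matching the translation of the original redex.
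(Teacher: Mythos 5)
Your overall skeleton (reduce to one-step reductions via the closure, generalize \Cref{subst_lp_Ku} to parallel substitutions, split the base case between $\beta$, $\R_{HOL}$ and $\R_{\T}$) matches the paper's proof, and the congruence, $\beta$ and $\R_{HOL}$ cases are fine. The gap is in the $\R_{\T}$ case, where you claim that $\lfloor \ell^{Ku} \rfloor = \ell^{Ku}$. This rests on a misreading of condition~4 of the definition of a theory encoded in higher-order logic: that condition forbids $\ell$ from \emph{being} the bare constant $\Prf$ or $\fa$, not from being \emph{headed by} an application of one of them. A rule whose left-hand side is an application of $\Prf$ is perfectly legal---its two sides have type $\Type \in \kappa_4$, not $\kappa_3$, so your type-based argument does not exclude it (nor does it exclude $\fa$-headed left-hand sides, whose type is $\Prop \in \kappa_2$)---and the paper's own running example (\Cref{ex_equality}) is exactly of this form: for the Leibniz rule, $\ell = \Prf ~(= a ~x ~y)$, so $\ell^{Ku} = (\lambda p. ~\Prf ~(\neg\neg p)) ~(= a ~x ~y)$ is headed by a $\beta$-redex, while $\lfloor \ell^{Ku} \rfloor = \Prf ~(\neg\neg (= a ~x ~y))$ is only $\beta$-convertible to $\ell^{Ku}$, not equal to it.

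As a consequence, your intermediate claim that every one-step reduction in $\T$ is sent to a one-step reduction in $\T^{Ku}$ is false: for such a rule, $(\ell\theta)^{Ku} = \ell^{Ku}\theta^{Ku}$ is not an instance of the translated rule $\lfloor \ell^{Ku} \rfloor \lra r^{Ku}$; one must first contract the head $\beta$-redex. The lemma itself is unharmed, because conversion is all you need: $(\ell\theta)^{Ku} = \ell^{Ku}\theta^{Ku} \equiv_{\beta\R} \lfloor \ell^{Ku} \rfloor \theta^{Ku} \equiv_{\beta\R} r^{Ku}\theta^{Ku} = (r\theta)^{Ku}$, the first step being a head $\beta$-reduction and the second an application of the translated rule---this is precisely how the paper argues this case. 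So the repair is local: weaken your one-step claim to ``each one-step reduction in $\T$ translates to a \emph{conversion} in $\T^{Ku}$,'' which still closes under reflexivity, symmetry, transitivity and context. But as written, the justification of the key case is wrong, and it is contradicted by the very example the operation $\lfloor \cdot \rfloor$ was introduced to handle.
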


\begin{proof}
By induction on the construction of $A \equiv_{\beta\R} B$.
\begin{itemize}
    \item If $\ell \lra r$ in $\T$, then we show $(\ell\theta)^{Ku} \equiv_{\beta\R} (r\theta)^{Ku}$ in $\T^{Ku}$ for any substitution $\theta$. For $\ell \lra r \in \R_{HOL}$, we have $\ell^{Ku} = \ell$ and $r^{Ku} = r$, and we use \Cref{subst_lp_Ku}. For $\ell \lra r \in \R_{\T}$, we have $\lfloor \ell^{Ku} \rfloor \lra r^{Ku} \in \R_{\T}^{Ku}$, which entails that $(\ell\theta)^{Ku} = \ell^{Ku}\theta^{Ku} \equiv_{\beta\R} \lfloor \ell^{Ku} \rfloor \theta^{Ku} \equiv_{\beta\R} r^{Ku}\theta^{Ku} = (r\theta)^{Ku}$ by \Cref{subst_lp_Ku}.
    \item If $(\lambda x : A. ~t) ~u \lra t[x \leftarrow u]$ in $\T$, then we have $((\lambda x : A. ~t) ~u)^{Ku} = (\lambda x : A^{Ku}. ~t^{Ku}) ~u^{Ku}$ , which $\beta$-reduces to $t^{Ku}[x \leftarrow u^{Ku}]$, that is $(t[x \leftarrow u])^{Ku}$ using \Cref{subst_lp_Ku}.
    \item Closure by context, reflexivity, symmetry, and transitivity are immediate.
\end{itemize}
\end{proof}

\begin{theorem}[Translation of judgments]
Let $\T$ be a theory encoded in higher-order logic.
\begin{itemize}
    \item If $\vdash \Gamma$ in $\T$ then $\vdash \Gamma^{Ku}$ in $\T^{Ku}$.
    \item If $\Gamma \vdash t : A$ in $\T$ then $\Gamma^{Ku} \vdash t^{Ku} : A^{Ku}$ in $\T^{Ku}$.
\end{itemize}
\end{theorem}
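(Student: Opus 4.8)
## Proof Proposal

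The plan is to prove both statements simultaneously by mutual induction on the derivation of the judgment, since the typing rules \textsc{[Decl]} and \textsc{[Empty]} for context well-formedness reference term-typing judgments and vice versa. I would proceed by case analysis on the last rule applied in the derivation. The two auxiliary lemmas already established---\Cref{subst_lp_Ku} on substitutions and \Cref{conv_lp_Ku} on conversions---are precisely what is needed to make the two ``hard'' rules (\textsc{[App]} and \textsc{[Conv]}) go through, so the bulk of the argument is a routine traversal of the remaining rules.

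For the structural rules the reasoning is mechanical. The \textsc{[Empty]} and \textsc{[Decl]} rules follow directly from the definition of $\Gamma^{Ku}$ together with the induction hypothesis, noting that $A \in \kappa_i$ implies $A^{Ku} \in \kappa_i$ (the translation preserves the grammars, since it only modifies $\kappa_3$-types and maps them into $\kappa_3$-types). The \textsc{[Sort]} rule is immediate since $\Type^{Ku} = \Type$ and $\Kind^{Ku} = \Kind$. For \textsc{[Var]} I would use that $(x : A) \in \Gamma$ gives $(x : A^{Ku}) \in \Gamma^{Ku}$ by the definition of context translation, so $x^{Ku} = x$ is typed at $A^{Ku}$. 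The \textsc{[Prod]} and \textsc{[Abs]} rules follow by applying the induction hypotheses to each premise and recombining, using that $(\Pi x : A.\ B)^{Ku} = \Pi x : A^{Ku}.\ B^{Ku}$ and $(\lambda x : A.\ t)^{Ku} = \lambda x : A^{Ku}.\ t^{Ku}$ commute with the rule shape.

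The two rules requiring genuine care are \textsc{[App]} and \textsc{[Const]}. For \textsc{[App]}, the conclusion $\Gamma \vdash t\ u : B[x \leftarrow u]$ must be translated to $\Gamma^{Ku} \vdash (t\ u)^{Ku} : (B[x \leftarrow u])^{Ku}$; here $(t\ u)^{Ku} = t^{Ku}\ u^{Ku}$ by definition, and the induction hypotheses give $\Gamma^{Ku} \vdash t^{Ku} : (\Pi x : A.\ B)^{Ku} = \Pi x : A^{Ku}.\ B^{Ku}$ and $\Gamma^{Ku} \vdash u^{Ku} : A^{Ku}$, so \textsc{[App]} yields type $B^{Ku}[x \leftarrow u^{Ku}]$, which equals $(B[x \leftarrow u])^{Ku}$ by \Cref{subst_lp_Ku}. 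The \textsc{[Conv]} rule is handled analogously using \Cref{conv_lp_Ku}: from $A \equiv_{\beta\R} B$ in $\T$ we obtain $A^{Ku} \equiv_{\beta\R} B^{Ku}$ in $\T^{Ku}$, so the converted type is again available.

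The main obstacle is the \textsc{[Const]} case, which is where the entire construction earns its keep and where the three kinds of constants must be separated. For a constant $c$ that is \emph{not} one of $\Prf$, $\fa$, or a natural-deduction rule, we have $c^{Ku} = c$ and $A^{Ku} = A$ (such $c$ come from $\Sigma_{\T}$, whose declarations are preserved verbatim in $\Sigma_{\T}^{Ku}$ together with their translated types), so the result follows once we check $c : A^{Ku} \in \Sigma^{Ku}$. For $c = \Prf$ or $c = \fa$, one computes directly that the closed term $c^{Ku}$ (the $\lambda$-abstraction supplied in the definition) has exactly the type $A^{Ku}$, which is a short explicit typing derivation in $\Sigma_{HOL}^i$. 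The delicate subcase is when $c$ represents a natural-deduction rule: here $c^{Ku} = c^i$ and we need $\Gamma^{Ku} \vdash c^i : A^{Ku}$ in $\T^{Ku}$, which is exactly \Cref{prop_const_Ku} (weakened from the empty context), relying crucially on the fact that every $c^i$ is an \emph{intuitionistic} proof term built only from the constants of $(\Sigma_{HOL}^i, \R_{HOL})$. Note that the classical constant $\mathsf{pem}$ is \emph{not} a natural-deduction rule in the present sense and is simply dropped, which is why $\Sigma_{HOL}^c$ is replaced by $\Sigma_{HOL}^i$ in $\T^{Ku}$; verifying that nothing in a classical derivation forces the reappearance of $\mathsf{pem}$ after translation is the conceptual heart of the embedding.
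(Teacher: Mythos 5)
Your proposal follows the paper's structure almost everywhere (induction on derivations; \textsc{App} via \Cref{subst_lp_Ku}; \textsc{Conv} via \Cref{conv_lp_Ku}; direct typing derivations for $\Prf^{Ku}$ and $\fa^{Ku}$; \Cref{prop_const_Ku} plus weakening for the natural-deduction constants), but your treatment of $\mathsf{pem}$ is a genuine gap, and it sits exactly where the theorem has its content. A theory encoded in higher-order logic may have $\Sigma = \Sigma_{HOL}^c \cup \Sigma_{\T}$, so a derivation in $\T$ may end with \textsc{Const} applied to $c = \mathsf{pem}$. The induction then demands a term $\mathsf{pem}^{Ku}$ with $\Gamma^{Ku} \vdash \mathsf{pem}^{Ku} : (\Pi p : \Prop. ~\Prf ~(p \disj \neg p))^{Ku}$, i.e.\ of type $\Pi p : \Prop. ~\Prf ~(\neg\neg (p \disj \neg p))$, in $\T^{Ku}$. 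If $\mathsf{pem}$ is ``simply dropped'' and is not among the constants replaced by intuitionistic proof terms, then $\mathsf{pem}^{Ku} = \mathsf{pem}$, which does not even belong to the signature $\Sigma_{HOL}^i \cup \Sigma_{\T}^{Ku}$, and this case of the induction collapses. There is also nothing to ``verify about the reappearance of $\mathsf{pem}$ after translation'': the translation is compositional, so the only obligation is to supply a well-typed image for each constant. The paper does exactly that, counting $\mathsf{pem}$ among the constants handled by \Cref{prop_const_Ku} and setting $\mathsf{pem}^{Ku} = \mathsf{pem}^i$, an explicit \emph{intuitionistic} proof term of $\Pi p : \Prop. ~\Prf ~(\neg\neg (p \disj \neg p))$. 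That the double negation of excluded middle is intuitionistically provable is precisely Kuroda's point; you correctly sensed this is the heart of the embedding but replaced the actual mechanism with an unsubstantiated claim.

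A smaller inaccuracy: in your merged ``otherwise'' case you assert $A^{Ku} = A$. That is true for the remaining constants of $\Sigma_{HOL}$ (whose types contain neither $\Prf$ nor $\fa$), but not for user-defined constants in $\Sigma_{\T}$, whose types may lie in $\kappa_3$ (user axioms and inference rules are permitted by the $\kappa$-property, e.g.\ in superdeduction-style theories). For those, the correct argument is the one your parenthetical gestures at: $\Sigma_{\T}^{Ku}$ declares $c : A^{Ku}$, so \textsc{Const} applies directly with the \emph{translated} type; no equality $A^{Ku} = A$ is needed or available. The paper keeps these two cases separate, and you should too.
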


\begin{proof}
We proceed by induction on the derivation. We present the most interesting cases, the others follow the definition and the induction hypotheses.
\begin{itemize}
\item \underline{\textsc{Const}}: By induction we have $\vdash \Gamma^{Ku}$ and $\Gamma^{Ku} \vdash A^{Ku} : s^{Ku}$ in $\T^{Ku}$. 

If $c : A \in \Sigma_{\T}$, then $c : A^{Ku} \in \Sigma_{\T}^{Ku}$ and we derive $\Gamma^{Ku} \vdash c : A^{Ku}$ using \textsc{Const}.

Suppose that $c = \Prf$. We simply derive $\Gamma^{Ku} \vdash \lambda p. ~\Prf ~(\neg\neg p) : \Prop \ra \Type$, that is $\Gamma^{Ku} \vdash \Prf^{Ku} : (\Prop \ra \Type)^{Ku}$, in $\T^{Ku}$.

Suppose that $c = \fa$. We simply derive $\Gamma^{Ku} \vdash \lambda x. ~\lambda p. ~\fa ~x ~(\lambda z. ~\neg\neg (p ~z)) : \Pi x : \Set. ~(\El ~x \ra \Prop) \ra \Prop$, that is $\Gamma^{Ku} \vdash \fa^{Ku} : (\Pi x : \Set. ~(\El ~x \ra \Prop) \ra \Prop)^{Ku}$, in $\T^{Ku}$.

Suppose that $c$ is a constant representing a natural deduction rule. Using \Cref{prop_const_Ku}, we have $\Gamma^{Ku} \vdash c^i : A^{Ku}$ in $\T^{Ku}$, that is $\Gamma^{Ku} \vdash c^{Ku} : A^{Ku}$. In particular, we replace the classical axiom $\mathsf{pem} : \Pi p : \Prop. ~\Prf ~(p \disj \neg p)$ by the intuitionistic term $\mathsf{pem}^i : \Pi p : \Prop. ~\Prf ~(\neg\neg (p \disj \neg p))$.

Otherwise, $c : A \in \Sigma_{HOL}$ but is not $\Prf$, not $\fa$, and not a constant representing a natural deduction rule. Then $A$ does not contain $\Prf$ and $\fa$, so $A^{Ku} = A$. We derive $\Gamma^{Ku} \vdash c : A^{Ku}$ using \textsc{Const}.

\item \underline{\textsc{Conv}}: By induction we have $\Gamma^{Ku} \vdash t^{Ku} : A^{Ku}$ in $\T^{Ku}$ and $\Gamma^{Ku} \vdash B^{Ku} : s^{Ku}$ in $\T^{Ku}$. From \Cref{conv_lp_Ku}, we know that $A^{Ku} \equiv_{\beta\R} B^{Ku}$, and we conclude that $\Gamma^{Ku} \vdash t^{Ku} : B^{Ku}$ in $\T^{Ku}$ using \textsc{Conv}.
\end{itemize}
\end{proof}

\begin{example}[Translated equational theory]
The translation of the theory $\T = (\Sigma_{HOL} \cup \Sigma_{eq}, \R_{HOL} \cup \R_{eq})$ of \Cref{ex_equality} is obtained by taking the equality symbol $= ~: \Pi a : \Set. ~\El ~a \ra \El ~a \ra \Prop$ (which remains unchanged), and by transforming the rewrite rule $\Prf ~(= a ~x ~y) \lra \Pi P : \El ~a \ra \Prop. ~\Prf ~(P ~x) \ra \Prf ~(P ~y)$ into $\Prf ~(\neg\neg (= a ~x ~y)) \lra \Pi P : \El ~a \ra \Prop. ~\Prf ~(\neg\neg (P ~x)) \ra \Prf ~(\neg\neg (P ~y))$. The proof of reflexivity is now given by $\lambda a : \Set. ~\mathsf{all_i}^i ~a ~(\lambda x : \El ~a. = a ~x ~x) ~(\lambda x : \El ~a. ~\lambda P : \El ~a \ra \Prop. ~\lambda P_x : \Prf ~(\neg\neg (P ~x)). ~P_x)$ which is of type $\Pi a : \Set. ~\Prf ~(\neg\neg (\fa ~a ~(\lambda x : \El ~a. ~\neg\neg (= a ~x ~x))))$. 
\end{example}

\subsection{Back to the Original Theory}

We have shown that, in the \lpcm, $\Gamma \vdash t : A$ in $\T$ implies $\Gamma^{Ku} \vdash t^{Ku} : A^{Ku}$ in $\T^{Ku}$. We now want to prove the reverse implication: if there exists an intuitionistic proof of $A^{Ku}$ in $\T^{Ku}$, then there exists a classical proof of $A$ in $\T$. To do so, we reason in two steps: first we show that it is possible to build a proof of $A$ from a proof of $A^{Ku}$ in classical logic, and then we show that any result in $\T^{Ku}$ can also be derived in $\T$.

The first step consists in proving that, for any $A \in \kappa_3$, it is possible to derive $A^{Ku}$ from $A$. For this, we show that any proposition $P$ and its translation $P^{Ku}$ are classically equivalent. Such a result is not necessarily true in higher-order logic. We assume some property, called the Kuroda equivalence.

\begin{definition}[Kuroda equivalence]
Let $\Gamma$ be a context, $t$ be a constant or a variable such that $\Gamma \vdash t : T_1 \ra \ldots \ra T_n \ra \Prop$, and $u_1, \ldots, u_n$ be terms such that $\Gamma \vdash u_i : T_i$. There exists some $p$ such that $\Gamma \vdash p : \Prf ~((t ~u_1 ~\ldots ~u_n)^{Ku} \eqi t ~u_1 ~\ldots ~u_n)$.
\end{definition}
The Kuroda equivalence property is derivable from functional extensionality and propositional extensionality in classical logic~\cite{kuroda_hol}. Remark that it is satisfied for the usual logical connectives and quantifiers. For instance, we have $A_{Ku} \conj B_{Ku} \eqi A \conj B$ and $\fa x ~\neg\neg A_{Ku} \eqi \fa x ~A$ in classical logic. In the rest of this paper, we work assuming the Kuroda equivalence.

\begin{lemma}
\label{lemma_prop_cases}
Any proposition $P$ is $\beta$-convertible to a variable $x$, a constant $c$, or an application $t ~u_1 \ldots ~u_n$ where $t$ is a constant or a variable of type $T_1 \ra \ldots \ra T_n \ra \Prop$ and $u_1, \ldots, u_n$ are terms of type $T_1, \ldots, T_n$.
\end{lemma}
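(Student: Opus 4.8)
The plan is to put $P$ into a form in which its head symbol is exposed, and then read off the three cases from the shape of that head. Since $P$ is well typed of type $\Prop$, I would first take a $\beta$-normal form $P'$ of $P$ (normalization of well-typed terms); by subject reduction for $\beta$-reduction, $P'$ is again of type $\Prop$, and $P \equiv_\beta P'$. A $\beta$-normal term of the \lpcm is one of the following: an abstraction $\lambda x : A.\, M$, a product $\Pi x : A.\, B$, a sort $s$, or a neutral term $h\, u_1 \ldots u_n$ with $n \geq 0$ whose head $h$ is a variable or a constant (a sort cannot be the head of an application, since it cannot be given a product type).

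The crux is to discard the first three possibilities using the typing rules together with uniqueness of types up to $\equiv_{\beta\R}$. If $P'$ were an abstraction, its type would be a product by \textsc{Abs}; if it were a product, its type would be a sort by \textsc{Prod}; and if it were a sort, its type would be $\Kind$ (or it would be untyped). In each case, comparing with $P' : \Prop$ would force $\Prop$ to be convertible to a product or to a sort. This cannot happen: by confluence, convertible terms share a reduct, but $\Prop$ is $\beta\R$-normal (no rule of $\R_{HOL}$, nor of the theories we consider, rewrites $\Prop$, the only rule touching it being $\El\,\o \lra \Prop$ whose left-hand side is $\El\,\o$), whereas a product reduces only to products and a sort only to sorts. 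Hence $P'$ must be a neutral term $h\, u_1 \ldots u_n$.

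It then remains to split on $n$. If $n = 0$, then $P' = h$ is precisely a variable or a constant, which is one of the desired cases. If $n \geq 1$, I set $t := h$ and peel the applications off one at a time using \textsc{App}: this shows that $t$ has a product type $\Pi x_1 : T_1.\ \ldots\ \Pi x_n : T_n.\, C$ with $\Gamma \vdash u_i : T_i$ and $C[x_1 \leftarrow u_1, \ldots, x_n \leftarrow u_n] \equiv_{\beta\R} \Prop$, i.e.\ $t$ has type $T_1 \ra \ldots \ra T_n \ra \Prop$ (read, as in the statement, up to conversion and allowing dependency); consistently, this type lies in the grammar $\kappa_2$. This yields the application case, completing the analysis.

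I expect the main obstacle to be the elimination step of the second paragraph: making rigorous that a term of type $\Prop$ is convertible neither to an abstraction, nor to a product, nor to a sort. This is exactly where the meta-theory of the \lpcm is needed---subject reduction, uniqueness of types up to conversion, and confluence---together with the concrete shape of the rewrite rules of the higher-order encoding, which guarantees that $\Prop$ is not convertible to a product or to a sort.
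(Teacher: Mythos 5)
The paper states \Cref{lemma_prop_cases} without any proof, so there is no official argument to compare yours against; judged on its own merits, your proof has the right skeleton---normalize, classify the possible shapes of the normal form, discard non-neutral heads by typing---but it rests on one pillar that the framework does not supply: the very first step, ``take a $\beta$-normal form $P'$ of $P$ (normalization of well-typed terms)''. In the pure \lpc this is indeed a theorem, but in the \lpcm well-typedness is relative to the rewrite system, and the paper's notion of theory (even restricted to theories encoded in higher-order logic) assumes only confluence and type preservation of $\lra_{\beta\R}$, never termination. User-defined rules at the level of $\Set$ can destroy $\beta$-normalization of well-typed terms. Concretely, take $\Sigma_\T = \{ c : \Set \}$ and $\R_\T = \{ c \lra c \arr \o \}$: all conditions of the definitions are met ($\Set \in \kappa_1$, the left-hand side $c$ is neither $\Prf$ nor $\fa$, confluence and type preservation still hold), yet $\El ~c \equiv_{\beta\R} \El ~c \ra \Prop$, so $\omega = \lambda x : \El ~c. ~x ~x$ is well typed and $\omega ~\omega$ has type $\Prop$, by a derivation satisfying the $\kappa$-property. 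The term $\omega ~\omega$ $\beta$-reduces only to itself, hence it is not $\beta$-convertible to a variable, a constant, or an application whose head is a variable or constant (such heads are stable under reduction, and by confluence of $\beta$ convertibility means having a common reduct). So in this theory the lemma itself is false, and not even weak head normalization is available: your appeal to normalization is not a routine citation of known meta-theory, it is exactly the missing hypothesis, and any correct proof must state it as an explicit assumption on the theories considered (this is a gap in the paper as much as in your attempt).

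Under such a normalization assumption, the rest of your argument is sound and is presumably the intended proof: subject reduction does hold here (type preservation of $\R$ is assumed, and the product compatibility needed for the $\beta$ case follows from confluence), your classification of $\beta$-normal terms is correct, and ruling out abstractions, products and sorts via uniqueness of types plus confluence---$\Prop$ is $\beta\R$-normal while products only reduce to products and sorts to sorts---is exactly right. Two smaller points deserve to be explicit rather than parenthetical. First, you only ever use the head spine of $P'$, never normality of the arguments $u_i$, so \emph{weak head} normalization suffices and the assumption can be weakened accordingly. Second, the type $T_1 \ra \ldots \ra T_n \ra \Prop$ in the statement cannot be read literally, since for instance $\fa : \Pi x : \Set. ~(\El ~x \ra \Prop) \ra \Prop$ is genuinely dependent; your reading ``up to conversion and allowing dependency'', with $\Gamma \vdash u_i : T_i[x_1 \leftarrow u_1, \ldots, x_{i-1} \leftarrow u_{i-1}]$, is the correct repair of the statement, and it is worth saying so since \Cref{prop_translate} invokes the lemma in precisely this dependent form.
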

The constant $c$ may be $\top$ or $\bot$, and the head symbol of the application may be any connective, quantifier or predicate.

\begin{proposition}
\label{prop_translate}
Let $\Gamma \vdash P : \Prop$. In the theory $(\Sigma_{HOL}^c \cup \Sigma, \R_{HOL} \cup \R)$, there exists some proof term $m_P$ such that $\Gamma \vdash m_P : \Prf ~ (P^{Ku} \eqi P)$.
\end{proposition}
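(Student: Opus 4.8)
The plan is to reason by cases on the shape of $P$, relying on \Cref{lemma_prop_cases}: every proposition is $\beta$-convertible to a variable $x$, a constant $c$, or an application $t ~u_1 \ldots ~u_n$ whose head $t$ is a constant or variable of type $T_1 \ra \ldots \ra T_n \ra \Prop$. Since we work in the classical theory $(\Sigma_{HOL}^c \cup \Sigma, \R_{HOL} \cup \R)$, the axiom $\mathsf{pem}$ is available, so in particular the classical double-negation elimination $\neg\neg Q \eqi Q$ can be used; this is what makes the statement hold classically while it would fail intuitionistically.

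For the degenerate cases, if $P$ is $\beta$-convertible to a variable $x$ or to the constant $\top$ or $\bot$, its head is neither $\Prf$, nor $\fa$, nor a natural-deduction rule, so the translation leaves it untouched and $P^{Ku} \equiv_{\beta\R} P$. A witness $m_P$ for $\Prf ~(P^{Ku} \eqi P)$ is then the reflexivity proof of $\eqi$: unfolding $p \eqi q \lra (p \imp q) \conj (q \imp p)$, it is built from two identity proofs $\mathsf{imp_i} ~P ~P ~(\lambda h. ~h)$ assembled with $\mathsf{and_i}$. For the main case $P \equiv_{\beta\R} t ~u_1 \ldots ~u_n$, I would invoke the Kuroda equivalence, which directly yields a proof of $\Prf ~((t ~u_1 \ldots ~u_n)^{Ku} \eqi t ~u_1 \ldots ~u_n)$. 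To turn this into $m_P$, I transfer along conversion: because $P \equiv_{\beta\R} t ~u_1 \ldots ~u_n$ and the translation commutes with $\beta$-reduction (the computation carried out in the $\beta$-redex case of \Cref{conv_lp_Ku}, using \Cref{subst_lp_Ku}), we get $P^{Ku} \equiv_{\beta\R} (t ~u_1 \ldots ~u_n)^{Ku}$; hence $\Prf ~(P^{Ku} \eqi P)$ and $\Prf ~((t ~u_1 \ldots ~u_n)^{Ku} \eqi t ~u_1 \ldots ~u_n)$ are convertible, and \textsc{Conv} lets me reuse the same proof term as $m_P$.

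The content therefore concentrates in the Kuroda equivalence and in the conversion bookkeeping, and the step I expect to be most delicate is the universal quantifier. There $(t ~u_1 ~u_2)^{Ku} = (\fa ~a ~p)^{Ku}$ $\beta$-reduces to $\fa ~a ~(\lambda z. ~\neg\neg (p^{Ku} ~z))$, so the equivalence with $\fa ~a ~p$ genuinely needs to absorb the inserted double negation---this is precisely where classical double-negation elimination is unavoidable, and where the purely intuitionistic argument breaks down. If one preferred not to appeal to the Kuroda equivalence for logical symbols, the connective and quantifier cases could instead be obtained by induction on $P$: apply the induction hypothesis to each propositional argument $u_i$ and recombine the sub-equivalences with the congruence of the corresponding connective (provable from the natural-deduction rules), adding one use of $\neg\neg Q \eqi Q$ for $\fa$ and none for $\ex$. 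Either way, the atomic predicate case ultimately rests on the Kuroda equivalence assumption, and checking that the translation commutes with $\beta$ so that \textsc{Conv} applies is the routine but essential glue.
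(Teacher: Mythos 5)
Your proposal follows essentially the same route as the paper's proof: a case analysis via \Cref{lemma_prop_cases}, a reflexivity witness for the variable and constant cases (where the translation acts as the identity), the Kuroda equivalence for the application case, and \textsc{Conv} together with \Cref{conv_lp_Ku} to transfer the proof term across $\beta$-convertibility. Your added details---the explicit construction of the reflexivity proof from $\mathsf{and_i}$ and $\mathsf{imp_i}$, and the remark on where double-negation elimination is genuinely needed---are correct but do not change the argument.
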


\begin{proof}
We distinguish cases thanks to \Cref{lemma_prop_cases}. 
\begin{itemize}
\item Suppose that $P$ is $\beta$-convertible to a variable $x$. We have $x^{Ku} = x$ so we build some $m_x$ such that $\Gamma \vdash m_x : \Prf ~(x^{Ku} \eqi x)$. Since $P$ is $\beta$-convertible to $x$, $P^{Ku}$ is $\beta$-convertible to $x^{Ku}$ (see \Cref{conv_lp_Ku}) and we conclude that $\Gamma \vdash m_x : \Prf ~(P^{Ku} \eqi P)$.
\item If $P$ is $\beta$-convertible to a constant $c$, then we are in the case where $c^{Ku} = c$ and we proceed similarly.
\item Suppose that $P$ is $\beta$-convertible to an application $t ~u_1 \ldots ~u_n$ where $t$ is a constant or a variable. $P^{Ku}$ is $\beta$-convertible to $(t ~u_1 \ldots ~u_n)^{Ku}$ and we conclude using the Kuroda equivalence.
\end{itemize}
\end{proof}

\begin{lemma}
\label{subformula_translate}
Let $A \in \kappa_3$ and $\ell$ be a strict subterm of $A$. In the theory $(\Sigma_{HOL}^c \cup \Sigma, \R_{HOL} \cup \R)$, for any context $\Gamma$, there exists some $t$ such that $\Gamma \vdash t : A[\ell]$ if and only if there exists some $t'$ such that $\Gamma \vdash t' : A[\ell^{Ku}]$.
\end{lemma}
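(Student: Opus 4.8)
The plan is to prove both directions simultaneously by induction on the structure of $A$ following the grammar $\kappa_3$, with a case analysis on where the distinguished occurrence of $\ell$ sits. Throughout, I would first dispose of the trivial situation: if $\ell$ contains neither $\Prf$ nor $\fa$ nor a constant representing a natural deduction rule, then $\ell^{Ku} = \ell$, hence $A[\ell] = A[\ell^{Ku}]$ and there is nothing to prove. So I may assume $\ell^{Ku} \neq \ell$, which---since we are not mixing sorts, propositions and proofs---forces $\ell$ to be a sub-proposition of $A$ (or the symbol $\Prf$ itself).

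For the structural cases I would peel off the outermost constructor of $A$. If $A = A_1 \ra A_2$ with $A_1, A_2 \in \kappa_3$ and $\ell$ occurs in $A_2$, then given $t : A_1 \ra A_2[\ell]$ I would work in the context $\Gamma, y : A_1$, where $t~y : A_2[\ell]$, apply the induction hypothesis for $A_2$ to obtain a proof of $A_2[\ell^{Ku}]$, and abstract over $y$; the converse is symmetric. If instead $\ell$ occurs in the contravariant component $A_1$, the same abstraction works after using the induction hypothesis for $A_1$ in the \emph{other} direction to convert a hypothesis $y : A_1[\ell^{Ku}]$ into a proof of $A_1[\ell]$ (this is legitimate precisely because the statement is an equivalence). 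The case $A = \Pi x : B.~C$ is analogous: when $\ell$ lies in the binder type $B \in \kappa_i$ with $i \in \{1,2\}$ we are in the trivial situation above, since such $B$ contains no $\Prf$ and no $\fa$, so $\ell^{Ku} = \ell$; and when $\ell$ lies in $C \in \kappa_3$ we transport the proof in the context $\Gamma, x : B$ and abstract, exactly as for the arrow.

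The base case $A = \Prf~p$ is the heart of the argument. Writing $p'$ for the proposition $p$ with the occurrence of $\ell$ replaced by $\ell^{Ku}$, it suffices to exhibit a classical proof of $\Prf~(p \eqi p')$: unfolding $\eqi$ into $(p \imp p') \conj (p' \imp p)$ and using $\mathsf{and_{e\ell}}$, $\mathsf{and_{er}}$ and $\mathsf{imp_e}$ then transports a proof of $\Prf~p$ to one of $\Prf~p'$ and back. To build $\Prf~(p \eqi p')$ I would do an inner induction on $p$, using \Cref{lemma_prop_cases} to expose its head. When $p$ is headed by a logical connective or a quantifier, $\ell$ lies in a propositional argument and the inner induction hypothesis yields the equivalence of that argument with its modified version; the congruence of the connective or quantifier is then derived from its introduction and elimination rules. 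When $\ell$ is the whole of $p$, the required $\Prf~(p \eqi p^{Ku})$ is exactly \Cref{prop_translate} (up to symmetry of $\eqi$). Finally, if $\ell$ is the symbol $\Prf$, replacement yields $\Prf~(\neg\neg p)$ after $\beta$-reduction (\Cref{conv_lp_Ku}), and $\Prf~p \eqi \Prf~(\neg\neg p)$ holds classically thanks to $\mathsf{pem}$.

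The step I expect to be the main obstacle is the remaining inner case, where $\ell$ is buried inside an argument of a \emph{predicate}, that is, where $p$ is $\beta$-convertible to an application $h~u_1 \ldots u_n$ with $h$ a variable or non-logical constant and $\ell$ a proper sub-proposition of some $u_j$. Here the connective rules are unavailable, and establishing $h~u_1 \ldots u_n \eqi h~u_1 \ldots u_j' \ldots u_n$ genuinely requires that the predicate $h$ respect the equivalence of its arguments. This is the point at which higher-order logic forces the use of propositional and functional extensionality---precisely the principles from which the Kuroda equivalence is derived---so I would discharge it by appealing to the Kuroda equivalence (equivalently, to propositional extensionality to turn $\ell \eqi \ell^{Ku}$ into an identity and functional extensionality to propagate it under $h$). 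Everything else reduces to routine manipulation of the natural deduction constants.
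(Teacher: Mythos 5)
Your proposal is correct and follows essentially the same route as the paper's proof: induction on the $\kappa_3$-structure of $A$, transporting proofs through $\ra$ and $\Pi$ by application in an extended context and re-abstraction (using the biconditional nature of the statement for the contravariant position), and disposing of the case $\ell^{Ku} = \ell$ exactly as the paper does. Your treatment of the base case $A = \Prf~p$ is in fact a more explicit elaboration of what the paper compresses into ``we use \Cref{prop_translate} on the right proposition'': the inner induction with congruence for connectives and quantifiers, together with the appeal to the Kuroda equivalence (i.e.\ propositional and functional extensionality) when $\ell$ sits under an uninterpreted predicate, is precisely the propagation argument that the paper leaves implicit.
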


\begin{proof}
We proceed by induction on the term $A$ using the fact that $A$ is generated by $\kappa_3$.
\begin{itemize}
\item Suppose that $A = \Prf ~P$. If $\fa$ does not occur in $\ell$, then $\ell^{Ku} = \ell$ and $P[\ell^{Ku}] = P[\ell]$, so we directly conclude. Otherwise, we use \Cref{prop_translate} on the right proposition.

\item Suppose that $A = \Pi x : B. ~C$ with $B \in \kappa_1$ or $B \in \kappa_2$. If $\ell$ occurs in $B$, then by definition $B[\ell^{Ku}] = B[\ell]$, so $\ell^{Ku} = \ell$ and we directly conclude. Suppose that $\ell$ only occurs in $C$ and that there exists some $t$ such that $\Gamma \vdash t : \Pi x : B. ~C[\ell]$. By induction on $C$ with $\Gamma, x : B \vdash t ~x : C[\ell]$ (obtained by weakening), we get some $t_C'$ such that $\Gamma, x : B \vdash t_C' : C[\ell^{Ku}]$. Therefore, we have $\Gamma \vdash \lambda x : B. ~t_C' : \Pi x : B. ~C[\ell^{Ku}]$. We proceed similarly for the reverse implication.

\item Suppose that $A = B \ra C$ with $B,C \in \kappa_3$. Suppose that we have $\Gamma \vdash t : B[\ell] \ra C[\ell]$. By induction on $B$ with $\Gamma, x : B[\ell^{Ku}] \vdash x : B[\ell^{Ku}]$, we get some $t_B$ such that $\Gamma, x : B[\ell^{Ku}] \vdash t_B : B[\ell]$. By induction on $C$ with $\Gamma, x : B[\ell^{Ku}] \vdash t ~t_B : C[\ell]$, we get some $t_C'$ such that $\Gamma, x : B[\ell^{Ku}] \vdash t_C' : C[\ell^{Ku}]$. We conclude that $\Gamma \vdash \lambda x : B[\ell^{Ku}]. ~t_C' : B[\ell^{Ku}] \ra C[\ell^{Ku}]$. We proceed similarly for the reverse implication.
\end{itemize}
\end{proof}

\begin{lemma}
\label{formula_translate}
Let $A \in \kappa_3$. In the theory $(\Sigma_{HOL}^c \cup \Sigma, \R_{HOL} \cup \R)$, for any context $\Gamma$, there exists some $t$ such that $\Gamma \vdash t : A$ if and only if there exists some $t'$ such that $\Gamma \vdash t' : A^{Ku}$.
\end{lemma}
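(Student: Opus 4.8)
The plan is to prove the biconditional by induction on the structure of $A \in \kappa_3$, establishing both directions simultaneously so that the induction hypothesis is available in whichever direction each case needs. The three cases follow the grammar of $\kappa_3$: the atomic formula $A = \Prf~P$, the arrow $A = B \ra C$ with $B, C \in \kappa_3$, and the dependent product $A = \Pi x : B.~C$ with $B \in \kappa_1 \cup \kappa_2$ and $C \in \kappa_3$. Throughout I work in the classical theory $(\Sigma_{HOL}^c \cup \Sigma, \R_{HOL} \cup \R)$ fixed in the statement.

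For the inductive cases I would reproduce the reasoning already used in \Cref{subformula_translate}. In the product case, $A^{Ku} = \Pi x : B.~C^{Ku}$ since $B^{Ku} = B$ (recall $B^{Ku} = B$ whenever $B \notin \kappa_3$); given $t : \Pi x : B.~C$, weakening yields $\Gamma, x : B \vdash t~x : C$, the induction hypothesis on $C$ produces some $t_C' : C^{Ku}$, and $\lambda x : B.~t_C'$ has type $A^{Ku}$, with the converse symmetric. In the arrow case, $A^{Ku} = B^{Ku} \ra C^{Ku}$, and the forward direction uses the induction hypothesis on $B$ in the \emph{reverse} direction (turning the hypothesis $x : B^{Ku}$ into some $t_B : B$) and then on $C$ in the forward direction, before abstracting over $x$; this is exactly the pattern of the corresponding case of \Cref{subformula_translate}, so the argument transfers with only notational changes.

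The crux is the base case $A = \Prf~P$. Here $A^{Ku} = \Prf^{Ku}~P^{Ku}$, which $\beta$-reduces, hence is convertible by \textsc{Conv}, to $\Prf~(\neg\neg P^{Ku})$. By \Cref{prop_translate} there is a proof $m_P : \Prf~(P^{Ku} \eqi P)$, so, extracting the two implications composing the biconditional and applying $\mathsf{imp_e}$, proofs of $\Prf~P$ and of $\Prf~P^{Ku}$ can be converted into one another in either direction. It then remains to bridge $\Prf~P^{Ku}$ and $\Prf~(\neg\neg P^{Ku})$: the direction $\Prf~P^{Ku} \to \Prf~(\neg\neg P^{Ku})$ is the intuitionistically valid double-negation introduction, whereas the direction $\Prf~(\neg\neg P^{Ku}) \to \Prf~P^{Ku}$ is double-negation elimination.

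This last step is where classicality is essential and where I expect the main difficulty to concentrate: double-negation elimination fails intuitionistically, but here $\mathsf{pem}$ supplies $\neg\neg Q \imp Q$ (case split on $Q \disj \neg Q$ via $\mathsf{or_e}$, the $Q$ branch being immediate and the $\neg Q$ branch discharged by feeding $\neg\neg Q$ and $\neg Q$ to $\mathsf{neg_e}$ and then $\mathsf{bot_e}$). Chaining the three equivalences with $Q = P^{Ku}$ gives $\exists t : \Prf~P$ iff $\exists t' : \Prf~(\neg\neg P^{Ku})$, closing the base case. One should keep in mind that the induction hypothesis is genuinely used as a biconditional, since the arrow case consumes it contravariantly in its domain; beyond this, no confluence or type-preservation subtlety arises past what \Cref{subformula_translate} and \Cref{prop_translate} already provide.
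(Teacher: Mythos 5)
Your proposal is correct and takes essentially the same route as the paper: the paper's proof is likewise a structural induction on $A$ as generated by $\kappa_3$, with the inductive cases handled by the reasoning of \Cref{subformula_translate} and the base case $\Prf~P$ settled by \Cref{prop_translate} together with classical double-negation elimination. The only cosmetic difference is that you inline the argument of \Cref{subformula_translate} instead of citing it directly.
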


\begin{proof}
We proceed by induction on the term $A$ using the fact that $A$ is generated by $\kappa_3$. We use \Cref{subformula_translate} and the double-negation elimination. 
\end{proof}

We have shown that it is possible to build a proof of $A$ in $\T^{Ku}$ using a proof of $A^{Ku}$ and the principle of excluded middle. The next step is to derive a proof of $A$ in the original theory $\T$. In particular, it requires to replace each use of $\lfloor \ell^{Ku} \rfloor \lra r^{Ku} \in \R^{Ku}_{\T}$ by a use of $\ell \lra r \in \R_{\T}$.

\begin{lemma}
\label{prop_rule_Ku}
Let $A \in \kappa_3$ such that $\Gamma \vdash t : A[\ell^{Ku}]$. Using $\ell \lra r$, there exists some $t'$ such that $\Gamma \vdash t' : A[r^{Ku}]$.
\end{lemma}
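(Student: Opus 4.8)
The plan is to sandwich a single use of the original rule $\ell \lra r$ between two applications of \Cref{subformula_translate}, exploiting the fact that in the original theory $\ell$ and $r$ are interconvertible. The key observation is that a \emph{direct} conversion argument fails: one cannot hope for $\ell^{Ku} \equiv_{\beta\R} r^{Ku}$ in $\T$, because $\ell^{Ku}$ carries the double negations inserted by the translation (for instance its head $\Prf$ becomes $\Prf~(\neg\neg \cdot)$), so the pattern of $\ell \lra r$ no longer matches $\ell^{Ku}$. The remedy is to detour through the untranslated subterms $\ell$ and $r$, where the rule does apply.

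Concretely, given $\Gamma \vdash t : A[\ell^{Ku}]$ with $A \in \kappa_3$, I would proceed in three steps. First, viewing $\ell$ as the designated strict subterm of $A$, the term $A[\ell^{Ku}]$ is exactly $A$ with that subterm translated; applying the right-to-left direction of \Cref{subformula_translate} to $A$ and $\ell$ turns $t$ into some $t_1$ with $\Gamma \vdash t_1 : A$, that is $\Gamma \vdash t_1 : A[\ell]$. Second, since $\ell \lra r$ belongs to the rewrite system of $\T$, we have $A[\ell] \equiv_{\beta\R} A[r]$, and because the rule preserves types, $A[r]$ is again a well-typed type of sort $\Type$; the \textsc{Conv} rule then yields $\Gamma \vdash t_1 : A[r]$. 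As $\ell$ and $r$ share a common type lying in some $\kappa_i$ with $i \neq 3$, substituting one for the other leaves the term in $\kappa_3$, so $A[r] \in \kappa_3$. Third, I apply \Cref{subformula_translate} once more, now to $A[r]$ with $r$ as the designated subterm, in the left-to-right direction; this produces $t'$ with $\Gamma \vdash t' : A[r^{Ku}]$, as required.

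The main obstacle is the boundary case in which $\ell$ occupies the whole of $A$ rather than a strict subterm---this can occur, since $\ell$ may itself be a $\Prf$-headed formula in $\kappa_3$ (as in the equational theory example) even though its type lies outside $\kappa_3$. In that situation \Cref{subformula_translate}, which requires a \emph{strict} subterm, does not apply, and I would instead invoke \Cref{formula_translate} on $A = \ell$ (respectively $A = r$) to cross between $A[\ell^{Ku}]$ and $A[\ell]$ (respectively $A[r]$ and $A[r^{Ku}]$), the middle rewriting step being unchanged. A secondary point, more of a verification than a genuine difficulty, is ensuring $A[r] \in \kappa_3$ so that the second application is licit; this follows from type preservation of $\ell \lra r$ together with the standing restriction that the two sides of a rule have a common type outside $\kappa_3$.
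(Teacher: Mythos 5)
Your proof is correct and takes essentially the same route as the paper's: sandwich one use of $\ell \lra r$ (via \textsc{Conv}) between a right-to-left and a left-to-right application of \Cref{subformula_translate}, passing through $A[\ell]$ and $A[r]$. Your additional handling of the root-position case---where $\ell$ is not a \emph{strict} subterm of $A$ and one must fall back on \Cref{formula_translate}---is a sound refinement of a detail the paper's three-line proof passes over silently.
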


\begin{proof}
Using \Cref{subformula_translate}, there exists some $t'$ such that $\Gamma \vdash t' : A[\ell]$. Using $\ell \lra r$, we have $\Gamma \vdash t' : A[r]$. We use \Cref{subformula_translate} to obtain some $t''$ such that $\Gamma \vdash t'' : A[r^{Ku}]$.
\end{proof}

\begin{lemma}
\label{prop_rules_Ku}
Let $(\Sigma_{HOL}^c \cup \Sigma, \R_{HOL} \cup \R^{Ku})$ and $(\Sigma_{HOL}^c \cup \Sigma, \R_{HOL} \cup \R)$ be two theories, abbreviated $\R^{Ku}$ and $\R$.
\begin{itemize}
\item If $\vdash \Gamma$ in $\R^{Ku}$ then $\vdash \Gamma$ in $\R$. 
\item If $\Gamma \vdash t : A$ in $\R^{Ku}$ and $A \in \kappa_i$ with $i \in \{ 1, 2, 4, 5 \}$, then $\Gamma \vdash t : A$ in $\R$.
\item If $\Gamma \vdash t : A$ in $\R^{Ku}$ and $A \in \kappa_3$, then there exists some $t'$ such that $\Gamma \vdash t' : A$ in $\R$.
\end{itemize}
\end{lemma}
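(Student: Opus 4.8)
The plan is to prove the three statements simultaneously by induction on the derivation carried out in the theory $\R^{Ku}$ (that is, $\R_{HOL}\cup\R^{Ku}$). The guiding observation is that the two theories share the signature $\Sigma_{HOL}^c\cup\Sigma$ and the rules $\R_{HOL}$, and differ only on the user rules, where $\ell\lra r\in\R$ becomes $\lfloor\ell^{Ku}\rfloor\lra r^{Ku}$. Since Kuroda's translation alters a term only where $\Prf$ or $\fa$ occurs, and these symbols live exclusively in the grammar $\kappa_3$, such a rule is \emph{genuinely} changed only when it mentions $\Prf$ or $\fa$, in which case its redex is a $\kappa_3$ type. Consequently, a conversion between two types lying in $\kappa_i$ with $i\neq 3$ uses only $\beta$-steps, $\R_{HOL}$-steps, and unchanged user rules, so it is equally valid in $\R$; only conversions at $\kappa_3$ types can genuinely use the changed rules. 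This dichotomy is exactly what splits the statement into the ``same term'' items ($i\in\{1,2,4,5\}$) and the ``new term'' item ($i=3$).

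For the first two items I would reuse the very same term. The rules \textsc{Empty}, \textsc{Sort}, \textsc{Var} and \textsc{Const} only require well-formedness of $\Gamma$ (first item) and the shared signature, so they reproduce the identical conclusion in $\R$; for \textsc{Decl} I invoke the second item on the premise $\Gamma\vdash A:s$, whose type is a sort in $\kappa_4\cup\kappa_5$. For \textsc{Prod}, \textsc{Abs} and \textsc{App} with conclusion type in $\kappa_i$, $i\in\{1,2,4,5\}$, a short check on the grammars shows every premise again has its type outside $\kappa_3$ (the domain of such a product lies in $\kappa_1\cup\kappa_2$, its codomain stays in the same class, and instantiating a $\kappa_3$ codomain would keep the result in $\kappa_3$), so the induction hypothesis returns the same subterms and the rule rebuilds the identical term in $\R$. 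For \textsc{Conv} with $B\in\kappa_i$, $i\neq 3$, the source $A$ is convertible to $B$ hence also outside $\kappa_3$; by the observation above $A\equiv_{\beta\R}B$, and \textsc{Conv} in $\R$ keeps the same term.

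For the third item, \textsc{Const} and \textsc{Var} again give the same term, and \textsc{Abs} and \textsc{App} follow by applying the relevant induction hypothesis to each premise and re-assembling a proof term with $\lambda$ or application (in the non-dependent arrow case $\kappa_3\ra\kappa_3$ the substituted codomain is unchanged, so the types match). The real work is \textsc{Conv}: here $A\equiv_{\beta\R^{Ku}}B$ with $A,B\in\kappa_3$, and the conversion may now use the changed rules. I would decompose it into a finite zig-zag $A=C_0\leftrightarrow\dots\leftrightarrow C_n=B$ of elementary $\beta$-, $\R_{HOL}$-, unchanged-user- or changed-user-reductions, and propagate a proof term in $\R$ along this path, starting from the term supplied by the induction hypothesis for $C_0=A$. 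A step valid in $\R$ keeps the same term via \textsc{Conv}, the well-typedness of the intermediate type in $\R$ coming from the second item applied to its sort judgment (available by subject reduction). A step firing a changed rule $\lfloor\ell^{Ku}\rfloor\lra r^{Ku}$ inside a $\kappa_3$ type is handled by \Cref{prop_rule_Ku}, reading the step up to the head $\beta$ relating $\lfloor\ell^{Ku}\rfloor$ and $\ell^{Ku}$ as the rewriting of $\ell^{Ku}$ to $r^{Ku}$; a reverse step uses the symmetric reading, legitimate because \Cref{subformula_translate} is an equivalence and $\ell\lra r$ is usable in both directions of $\equiv_{\beta\R}$. Iterating yields a proof of $B=C_n$ in $\R$.

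The crux is precisely this \textsc{Conv} step at a $\kappa_3$ type: one classical proof may convert its goal through the \emph{translated} user rules, which no longer exist in $\R$, and re-justifying each such conversion in the original theory is what forces the proof term to change---this is exactly the content of \Cref{prop_rule_Ku}, resting on \Cref{subformula_translate} and \Cref{formula_translate}, and hence ultimately on the Kuroda equivalence and classical double-negation elimination. A secondary technical point I would check carefully is that all intermediate types along the chosen conversion path remain in $\kappa_3$---guaranteed by the non-mixing of sorts, propositions and proofs, which keeps each $\kappa$-class closed under conversion---so that \Cref{prop_rule_Ku} always applies.
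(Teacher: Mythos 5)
Your proposal is correct and follows essentially the same route as the paper: induction on the typing derivation, with the same term reused for types in $\kappa_i$, $i \neq 3$ (since the translation and the changed rules only touch $\kappa_3$), reassembly of new terms in the \textsc{Abs}/\textsc{App} cases, and \Cref{prop_rule_Ku} resolving the \textsc{Conv} case at $\kappa_3$ types. Your explicit zig-zag decomposition of the conversion is exactly what the paper compresses into its ``without loss of generality, one rewrite rule of $\R^{Ku}$ per \textsc{Conv} rule,'' and your flagged technical points (well-typedness of intermediate types, closure of the $\kappa$-classes under conversion) are glossed in the same informal way there.
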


\begin{proof}
We proceed by induction on the typing derivation. We only present the relevant cases.
\begin{itemize}
\item \underline{\textsc{Abs}}: Suppose that $\Gamma \vdash A : \Type$ and $\Gamma, x : A \vdash B : s$ and $\Gamma, x : A \vdash t : B$ in $\R^{Ku}$. By induction we have $\Gamma \vdash A : \Type$ and $\Gamma, x : A \vdash B : s$ in $\R$. 

If $B \in \kappa_i$ with $i \in \{ 1, 2 \}$, then by induction we have $\Gamma, x : A \vdash t : B$ in $\R$, and we derive $\Gamma \vdash \lambda x : A. ~t : \Pi x : A. ~B$ in $\R$.

If $B \in \kappa_3$, then by induction we have $\Gamma, x : A \vdash t' : B$ in $\R$. We derive $\Gamma \vdash \lambda x : A. ~t' : \Pi x : A. ~B$ in $\R$.

\item \underline{\textsc{App}}: Suppose that $\Gamma \vdash t : \Pi x : A. ~B$ and $\Gamma \vdash u : A$ in $\R^{Ku}$. 

If $\Pi x : A. ~B \in \kappa_i$ with $i \in \{ 1, 2, 4 \}$, then by induction we have $\Gamma \vdash t : \Pi x : A. ~B$ and $\Gamma \vdash u : A$ in $\R$. We derive $\Gamma \vdash t ~u : B[x \leftarrow u]$ in $\R$.

If $\Pi x : A. ~B \in \kappa_3$, then by induction we have $\Gamma \vdash t' : \Pi x : A. ~B$ in $\R$. If $A \in \kappa_i$ with $i \in \{ 1, 2 \}$, then by induction we have $\Gamma \vdash u : A$ in $\R$, and we derive $\Gamma \vdash t' ~u : B[x \leftarrow u]$ in $\R$. If $A \in \kappa_3$ ($x$ does not occur in $B$), then by induction we have $\Gamma \vdash u' : A$ in $\R$, and we conclude that $\Gamma \vdash_c t' ~u' : B$.

\item \underline{\textsc{Conv}}: If $A \equiv_{\beta\R} B$ is obtained using $\beta$-conversion or the rewrite rules of $\R_{HOL}$, then we conclude using the induction hypothesis and the \textsc{Conv} rule. Otherwise, and without loss of generality, we consider that we only use one rewrite rule of $\R^{Ku}$ per \textsc{Conv} rule.

Suppose that $A \equiv_{\beta\R} B$ is obtained using the rewrite rule $\ell^{Ku} \lra r^{Ku} \in \R^{Ku}$. In that case, we have $A = C[\ell^{Ku}]$ and $B = C[r^{Ku}]$ (the case $A = C[r^{Ku}]$ and $B = C[\ell^{Ku}]$ is treated similarly). By assumption, we have $\Gamma \vdash t : C[\ell^{Ku}]$ and $\Gamma \vdash C[r^{Ku}] : s$ in $R^{Ku}$. 

If $A,B \in \kappa_i$ with $i \in \{ 1, 2, 4, 5 \}$, then $\ell^{Ku} = \ell$ and $r^{Ku} = r$. By induction we have $\Gamma \vdash t : C[\ell^{Ku}]$ and $\Gamma \vdash C[r^{Ku}] : s$ in $\R$. We apply \textsc{Conv} with $C[\ell] \equiv_{\beta\R} C[r]$.

If $A,B \in \kappa_3$, then by induction we have $\Gamma \vdash t' : C[\ell^{Ku}]$ and $\Gamma \vdash C[r^{Ku}] : s$ in $\R$. We conclude using \Cref{prop_rule_Ku}.
\end{itemize}
\end{proof}

We now have all the tools to show that, for any intuitionistic proof of $A^{Ku}$ in the translated theory $\T^{Ku}$, there exists a classical proof of $A$ in the original theory $\T$.

\begin{theorem}
Let $\T$ be a theory encoded in higher-order logic and $A \in \kappa_3$. If $\Gamma^{Ku} \vdash t : A^{Ku}$ in $\T^{Ku}$, then under the Kuroda equivalence there exists some term $t'$ such that $\Gamma \vdash t' : A$ in $\T$.
\end{theorem}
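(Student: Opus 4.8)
The plan is to peel off, one at a time, the layers that separate the translated theory $\T^{Ku}$ from the original theory $\T$, always retaining the classical signature $\Sigma_{HOL}^c$ so that double-negation elimination and the Kuroda equivalence remain available until the very end. The backbone of the argument is \Cref{prop_rules_Ku}, which trades the translated rewrite rules for the original ones, and \Cref{formula_translate}, which removes the translation from a $\kappa_3$ formula.

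First I would dispatch the structural layers. Since $\Sigma_{HOL}^i \subseteq \Sigma_{HOL}^c$, the hypothesis $\Gamma^{Ku} \vdash t : A^{Ku}$ in $\T^{Ku} = (\Sigma_{HOL}^i \cup \Sigma_{\T}^{Ku}, \R_{HOL} \cup \R_{\T}^{Ku})$ is at once a derivation in $(\Sigma_{HOL}^c \cup \Sigma_{\T}^{Ku}, \R_{HOL} \cup \R_{\T}^{Ku})$. As $A \in \kappa_3$ gives $A^{Ku} \in \kappa_3$ (up to $\beta$-conversion), the third clause of \Cref{prop_rules_Ku}, with $\Sigma := \Sigma_{\T}^{Ku}$, replaces $\R_{\T}^{Ku}$ by $\R_{\T}$ and produces $\Gamma^{Ku} \vdash t_1 : A^{Ku}$ in $(\Sigma_{HOL}^c \cup \Sigma_{\T}^{Ku}, \R_{HOL} \cup \R_{\T})$, while its first clause guarantees $\vdash \Gamma^{Ku}$ there. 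It remains to revert the translated user signature: for each user proof-constant $c : B \in \Sigma_{\T}$ with $B \in \kappa_3$, the forward direction of \Cref{formula_translate} in $\T$ turns the proof $c$ of $B$ into a closed term $d_c$ with $\vdash d_c : B^{Ku}$, and I substitute $c \mapsto d_c$ throughout the derivation. Because proofs never occur in types or in the rewrite rules of $\R_{\T}$, this substitution touches neither $\Gamma^{Ku}$, $A^{Ku}$, nor $\R_{\T}$, and leaves us with $\Gamma^{Ku} \vdash t_2 : A^{Ku}$ in $\T$ itself.

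The crux is then to erase the translation from the context and the goal simultaneously, which I would do by folding the proof-context into the goal. Because proof-variables do not appear in types, I may reorder $\Gamma$ as $\Delta, p_1 : H_1, \ldots, p_k : H_k$, where $\Delta$ gathers all declarations of type $\kappa_1, \kappa_2, \kappa_4, \kappa_5$ (on which the translation is the identity, so $\Delta^{Ku} = \Delta$) and each $H_j \in \kappa_3$. Abstracting the proof-hypotheses yields $\Delta \vdash \lambda p_1 : H_1^{Ku}. ~\cdots ~\lambda p_k : H_k^{Ku}. ~t_2 : H_1^{Ku} \ra \cdots \ra H_k^{Ku} \ra A^{Ku}$, and since the translation commutes with $\ra$, the right-hand type is precisely $G^{Ku}$ for $G := H_1 \ra \cdots \ra H_k \ra A \in \kappa_3$. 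A single application of the backward direction of \Cref{formula_translate} in $\T$---the step that actually consumes the Kuroda equivalence and classical logic---gives $\Delta \vdash s : G$; applying $s$ to $p_1, \ldots, p_k$ and reordering the context back produces $\Gamma \vdash s~p_1 \cdots p_k : A$ in $\T$, as required.

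The main obstacle is precisely the passage from $\Gamma^{Ku}$ and $\Sigma_{\T}^{Ku}$ back to $\Gamma$ and $\Sigma_{\T}$, since \Cref{formula_translate} only ever speaks about the goal formula. The organising observation that unlocks it is that proofs---whether bound in the context or declared as user constants---never occur inside types or rewrite rules: this is what legitimises permuting the proof-hypotheses to the tail of $\Gamma$, folding them into a genuine $\kappa_3$ arrow type whose translation is computed componentwise, and substituting the translated proof-constants without disturbing anything else, thereby reducing the whole context-and-signature translation to one formula-level use of \Cref{formula_translate}. The remaining side conditions---well-formedness of $\Gamma$ and $G$ in $\T$, and $A^{Ku} \in \kappa_3$ up to conversion---follow routinely from inversion of dependent products.
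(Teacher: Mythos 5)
Your proposal is correct and follows essentially the same route as the paper's proof: both peel off the translated rewrite rules with \Cref{prop_rules_Ku}, replace each translated proof-constant $c : B^{Ku}$ by a term $d_c$ of type $B^{Ku}$ obtained from \Cref{formula_translate} applied to $c : B$, and use \Cref{formula_translate} (under the Kuroda equivalence) to pass between $A$ and $A^{Ku}$, all justified by the observation that proof-terms cannot occur in types. The only differences are cosmetic: the paper first untranslates the goal ($A^{Ku}$ to $A$, which avoids your up-to-$\beta$-conversion caveat when invoking \Cref{prop_rules_Ku}) and then removes the context translation by substituting each $\kappa_3$ variable $x : B^{Ku}$ with a proof $t_x$ of $B^{Ku}$ built from $x : B$ in $\Gamma$, rather than folding the hypotheses into a single arrow type $G$ and applying \Cref{formula_translate} once, as you do.
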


\begin{proof}
We directly have $\Gamma^{Ku} \vdash t : A^{Ku}$ in $(\Sigma_{HOL}^c \cup \Sigma_{\T}^{Ku}, \R_{HOL} \cup \R_{\T}^{Ku})$. 
\begin{itemize}
\item By \Cref{formula_translate}, there exists some $t'$ such that $\Gamma^{Ku} \vdash t' : A$ in $(\Sigma_{HOL}^c \cup \Sigma_{\T}^{Ku}, \R_{HOL} \cup \R_{\T}^{Ku})$ and under the Kuroda equivalence. 
\item Using \Cref{prop_rules_Ku}, there exists some $t''$ such that $\Gamma^{Ku} \vdash t'' : A$ in $(\Sigma_{HOL}^c \cup \Sigma_{\T}^{Ku}, \R_{HOL} \cup \R_{\T})$.
\item We replace the signature $\Sigma_{\T}^{Ku}$ by $\Sigma_{\T}$. For each constant $c : C \in \Sigma_{\T}$ with $C \in \kappa_3$, we replace $c$ by $t_c$ (provided by \Cref{formula_translate}) in $t''$. We obtain $\Gamma^{Ku} \vdash t''[c \leftarrow t_c] : A$ in $(\Sigma_{HOL}^c \cup \Sigma_{\T}, \R_{HOL} \cup \R_{\T})$, that is in $\T$. These substitutions work since $c$ cannot occur in a dependent type.  
\item We replace the context $\Gamma^{Ku}$ by $\Gamma$. For each variable $x : B \in \Gamma$ with $B \in \kappa_3$, we replace $x$ by $t_x$ (provided by \Cref{formula_translate}) in $t''[c \leftarrow t_c]$. We obtain $\Gamma \vdash t''[c \leftarrow t_c][x \leftarrow t_x] : A$ in $\T$, which achieves the proof.
\end{itemize}
\end{proof}
The extension of Kuroda's translation to the \lpcm is a generalization of Brown and Rizkallah's translation for simple type theory~\cite{brown_rizkallah}. Indeed, if $\R_{\T} = \langle \rangle$, then we obtain the result in higher-order logic, at the only difference that proofs are represented by terms.

\section{Construkti, an Implementation for Dedukti Proofs}
\label{sec_implem}

\paragraph{Dedukti.} The \lpcm has been implemented in the \dk proof language. Abstractions $\lambda x : A. ~t$ are represented by \texttt{x : A => t}, and dependent types $\Pi x : A. ~B$ are represented by \texttt{x : A -> B}. Constants $c : A$ are specified by \texttt{c : A}, prefixed with the keyword \texttt{def} if the constant can be defined using rewrite rules. Rewrite rules $\ell \lra r$, where $x$ and $y$ are the free variables of $\ell$ and $r$, are represented by \texttt{[x,y] l --> r}. For instance, using the encoding  of the notions of proposition and proof, we can encode the addition on natural numbers via rewrite rules.

\begin{lstlisting}
nat : Set.
0 : El nat.
S : El nat -> El nat.
def add : El nat -> El nat -> El nat.
[x] add x 0 --> x.
[x, y] add x (S y) --> S (add x y).
\end{lstlisting}
Theorems are represented by \texttt{thm n : T := p}, where \texttt{n} is its name, \texttt{T} its statement and \texttt{p} its proof term. For checking that \texttt{p} is indeed a proof of \texttt{T}, we can use one of the type checkers of \dk, for instance \textsc{DKCheck}~\cite{saillard} or \textsc{Lambdapi}~\cite{deduktiengine}.

\paragraph{Construkti.} We have implemented \ck\footnote{Available at \url{https://github.com/Deducteam/Construkti}.}, a tool that performs Kuroda's translation on \dk proofs. \ck takes as input a \dk file containing the specification of a user-defined theory encoded in higher-order logic, as well as proofs in this theory. It returns a \dk file containing the specification of the translated theory, as well as the translated proofs. 

In this implementation, we insert one double negation after every $\Prf$ and $\fa$ symbols, and we replace the constants $c$ representing natural deduction rules by the terms $c^i$. For instance, the constant $\mathsf{top_i}$ of type $\Prf ~\top$, representing the introduction of tautology, is replaced in the formal proofs by the term $\mathsf{top_i}^i$ of type $\Prf ~(\neg\neg \top)$. The proof term $\mathsf{top_i}^i$ relies on the proof of $\Pi p : \Prop. ~\Prf ~(p \imp \neg\neg p)$.

\begin{lstlisting}
top_i : Prf top.

thm prop_double_neg : p : Prop -> Prf (imp p (not (not p))) 
:= p => imp_i p (not (not p)) 
	(pP => neg_i (not p) (pNP => neg_e p pNP pP)).

thm top_i_i : Prf (not (not top))
:= imp_e top (not (not top)) (prop_double_neg top) top_i.
\end{lstlisting}
So as to obtain readable theorems, we directly $\beta$-reduce every application of $\Prf^{Ku}$ and $\fa^{Ku}$.

\paragraph{Benchmark.} We have tested \ck on a benchmark of 101 \dk proofs, available in the file \texttt{hol-lib.dk}. These proofs encompass results related to connectives and quantifiers, classical formulas, De Morgan's laws, polymorphic equality, and basic arithmetic. The proofs are expressed in propositional, first-order and higher-order logics. This library of proofs includes user-defined rewrite rules---a feature of the \lpcm---and inference rules---thanks to the encoding of the notions of proposition and proof. We compare in \Cref{fig_table} the different characteristics of the library: the number of proofs, the number of classical proofs, the number of results expressed in higher-order logic, and the number of results that are expressed via admissible inference rules. 

\begin{table}[ht]
\centering
\begin{tabular}{|l|c|c|c|c|}
\hline
Content of & \multicolumn{4}{|c|}{Number of ...} \\\cline{2-5}
the library & proofs & classical proofs & higher-order results & admissible inference rules \\
\hline
Basic logic & 38 & 0 & 15 & 26 \\
Classical results & 12 & 12 & 9 & 3\\
De Morgan & 8 & 6 & 4 & 8 \\
Equality & 10 & 0 & 6 & 4 \\
Arithmetic & 33 & 0 & 0 & 16 \\
\hline
All & 101 & 18 & 34 & 57 \\
\hline
\end{tabular}
\caption{Comparison of the different libraries.}
\label{fig_table}
\end{table}
After running \ck, all the translated proofs of the translated theorems typecheck, and are expressed in intuitionistic logic.

\section{Conclusion}

In this paper, we have extended Kuroda's translation to the theories encoded in higher-logic in the \lpcm, that is $\lambda$-calculus extended with dependent types and user-defined rewrite rules. In this logical framework, proofs are terms following the Curry-Howard correspondence, and have to be effectively translated. Due to the encoding of the notions of proposition and proof in the \lpcm, we can assume, prove, and translate inference rules. We have implemented \ck, a tool that transforms \dk proofs following Kuroda's translation. Both \dk and \ck pave the way for interoperability between classical proof systems---such as \textsc{HOL Light} or \textsc{Mizar}---and intuitionistic proof systems---such as \textsc{Coq}, \textsc{Lean} or \textsc{Agda}.

\paragraph{Future work.} There exist large libraries of proofs in higher-order logic, for instance the \textsc{HOL Light} standard library. Blanqui~\cite{hol2dk} recently translated it to \textsc{Coq} via \dk, taking the excluded middle as an axiom. Future work would be to obtain an intuitionistic version of the \textsc{HOL Light} standard library, by applying Kuroda's translation and \ck.

\paragraph{Related work.} Double-negation translations aim at embedding classical logic into intuitionistic logic. As such, double-negation translations \textit{always} transform classical proofs into intuitionistic ones, but they modify the formulas during the process. Proof constructivization aims at transforming classical proofs into intuitionistic ones \textit{without} translating the formulas, but such a process does not necessarily succeed. Cauderlier~\cite{cauderlier} developed heuristics to constructivize proofs in \dk, via rewrite systems that try to remove instances of the principle of excluded middle or of the double-negation elimination. Gilbert~\cite{gilbert} designed a constructivization algorithm for first-order logic, that was tested in \dk and works in practice for large fragments of first-order logic.

\section*{Acknowledgments}

The author would like to thank Marc Aiguier, Gilles Dowek and Olivier Hermant for helpful discussions and valuable remarks about this work.

\nocite{*}
\bibliographystyle{eptcs}
\bibliography{biblio}

\begin{thebibliography}{10}
\providecommand{\bibitemdeclare}[2]{}
\providecommand{\surnamestart}{}
\providecommand{\surnameend}{}
\providecommand{\urlprefix}{Available at }
\providecommand{\url}[1]{\texttt{#1}}
\providecommand{\href}[2]{\texttt{#2}}
\providecommand{\urlalt}[2]{\href{#1}{#2}}
\providecommand{\doi}[1]{doi:\urlalt{https://doi.org/#1}{#1}}
\providecommand{\eprint}[1]{arXiv:\urlalt{https://arxiv.org/abs/#1}{#1}}
\providecommand{\bibinfo}[2]{#2}

\bibitemdeclare{unpublished}{expressing}
\bibitem{expressing}
\bibinfo{author}{Ali \surnamestart Assaf\surnameend},
  \bibinfo{author}{Guillaume \surnamestart Burel\surnameend},
  \bibinfo{author}{Raphaël \surnamestart Cauderlier\surnameend},
  \bibinfo{author}{David \surnamestart Delahaye\surnameend},
  \bibinfo{author}{Gilles \surnamestart Dowek\surnameend},
  \bibinfo{author}{Catherine \surnamestart Dubois\surnameend},
  \bibinfo{author}{Frédéric \surnamestart Gilbert\surnameend},
  \bibinfo{author}{Pierre \surnamestart Halmagrand\surnameend},
  \bibinfo{author}{Olivier \surnamestart Hermant\surnameend} \&
  \bibinfo{author}{Ronan \surnamestart Saillard\surnameend}
  (\bibinfo{year}{2016}): \emph{\bibinfo{title}{{Dedukti: a Logical Framework
  based on the $\lambda\Pi$-Calculus Modulo Theory}}}.
\newblock \bibinfo{note}{Manuscript}.

\bibitemdeclare{article}{theoryU}
\bibitem{theoryU}
\bibinfo{author}{Fr{\'{e}}d{\'{e}}ric \surnamestart Blanqui\surnameend},
  \bibinfo{author}{Gilles \surnamestart Dowek\surnameend},
  \bibinfo{author}{{\'{E}}milie \surnamestart Grienenberger\surnameend},
  \bibinfo{author}{Gabriel \surnamestart Hondet\surnameend} \&
  \bibinfo{author}{Fran{\c{c}}ois \surnamestart Thir{\'{e}}\surnameend}
  (\bibinfo{year}{2023}): \emph{\bibinfo{title}{{A modular construction of type
  theories}}}.
\newblock {\slshape \bibinfo{journal}{{Logical Methods in Computer Science}}}
  \bibinfo{volume}{Volume 19, Issue 1}, \doi{10.46298/lmcs-19(1:12)2023}.
\newblock \urlprefix\url{https://lmcs.episciences.org/10959}.

\bibitemdeclare{inproceedings}{superdeduction}
\bibitem{superdeduction}
\bibinfo{author}{Paul \surnamestart Brauner\surnameend},
  \bibinfo{author}{Clement \surnamestart Houtmann\surnameend} \&
  \bibinfo{author}{Claude \surnamestart Kirchner\surnameend}
  (\bibinfo{year}{2007}): \emph{\bibinfo{title}{Principles of
  {Superdeduction}}}.
\newblock In: {\slshape \bibinfo{booktitle}{{LICS} 2007 - 22nd {Annual} {IEEE}
  {Symposium} on {Logic} in {Computer} {Science}}}, \bibinfo{address}{Wroclaw,
  Poland}, pp. \bibinfo{pages}{41--50}, \doi{10.1109/LICS.2007.37}.
\newblock
  \urlprefix\url{https://ieeexplore.ieee.org/abstract/document/4276550}.
\newblock \bibinfo{note}{ISSN: 1043-6871}.

\bibitemdeclare{article}{brown_rizkallah}
\bibitem{brown_rizkallah}
\bibinfo{author}{Chad~E. \surnamestart Brown\surnameend} \&
  \bibinfo{author}{Christine \surnamestart Rizkallah\surnameend}
  (\bibinfo{year}{2014}): \emph{\bibinfo{title}{{Glivenko and Kuroda for simple
  type theory}}}.
\newblock {\slshape \bibinfo{journal}{The Journal of Symbolic Logic}}
  \bibinfo{volume}{79}(\bibinfo{number}{2}), pp. \bibinfo{pages}{485--495},
  \doi{10.1017/jsl.2013.10}.
\newblock \urlprefix\url{http://www.jstor.org/stable/43303744}.

\bibitemdeclare{inproceedings}{cauderlier}
\bibitem{cauderlier}
\bibinfo{author}{Rapha{\"e}l \surnamestart Cauderlier\surnameend}
  (\bibinfo{year}{2016}): \emph{\bibinfo{title}{{A Rewrite System for Proof
  Constructivization}}}.
\newblock In: {\slshape \bibinfo{booktitle}{{LFMTP 2016 - International
  Workshop on Logical Frameworks and Meta-Languages: Theory and Practice}}},
  \bibinfo{address}{Porto, Portugal}, pp. \bibinfo{pages}{1 -- 7},
  \doi{10.1145/2966268.2966270}.
\newblock \urlprefix\url{https://inria.hal.science/hal-01420634}.

\bibitemdeclare{inproceedings}{lambdapi}
\bibitem{lambdapi}
\bibinfo{author}{Denis \surnamestart Cousineau\surnameend} \&
  \bibinfo{author}{Gilles \surnamestart Dowek\surnameend}
  (\bibinfo{year}{2007}): \emph{\bibinfo{title}{{Embedding Pure Type Systems in
  the Lambda-Pi-Calculus Modulo}}}.
\newblock In \bibinfo{editor}{Simona~Ronchi \surnamestart
  Della~Rocca\surnameend}, editor: {\slshape \bibinfo{booktitle}{Typed Lambda
  Calculi and Applications}}, \bibinfo{publisher}{Springer Berlin Heidelberg},
  \bibinfo{address}{Berlin, Heidelberg}, pp. \bibinfo{pages}{102--117},
  \doi{10.1007/978-3-540-73228-0_9}.

\bibitemdeclare{inproceedings}{rewriteSystem}
\bibitem{rewriteSystem}
\bibinfo{author}{Nachum \surnamestart Dershowitz\surnameend} \&
  \bibinfo{author}{Jean-Pierre \surnamestart Jouannaud\surnameend}
  (\bibinfo{year}{1991}): \emph{\bibinfo{title}{{Rewrite Systems}}}.
\newblock In: {\slshape \bibinfo{booktitle}{Handbook of Theoretical Computer
  Science, Volume B: Formal Models and Sematics}},
  \doi{10.1016/B978-0-444-88074-1.50011-1}.

\bibitemdeclare{article}{deduction_modulo}
\bibitem{deduction_modulo}
\bibinfo{author}{Gilles \surnamestart Dowek\surnameend},
  \bibinfo{author}{Thérèse \surnamestart Hardin\surnameend} \&
  \bibinfo{author}{Claude \surnamestart Kirchner\surnameend}
  (\bibinfo{year}{2003}): \emph{\bibinfo{title}{{Theorem Proving Modulo}}}.
\newblock {\slshape \bibinfo{journal}{Journal of Automated Reasoning}}
  \bibinfo{volume}{31}, pp. \bibinfo{pages}{33--72},
  \doi{10.1023/A:1027357912519}.

\bibitemdeclare{inproceedings}{hol2dk}
\bibitem{hol2dk}
\bibinfo{author}{\surnamestart {Frédéric Blanqui}\surnameend}
  (\bibinfo{year}{2024}): \emph{\bibinfo{title}{Translating HOL-Light proofs to
  Coq}}.
\newblock In: {\slshape \bibinfo{booktitle}{LPAR 2024 - 25th Conference on
  Logic for Programming, Artificial Intelligence and Reasoning}},
  \bibinfo{address}{Balaclava, Mauritius}, pp. \bibinfo{pages}{1--18},
  \doi{10.29007/6k4x}.
\newblock \urlprefix\url{https://easychair.org/publications/paper/mtFT}.

\bibitemdeclare{article}{gentzen1936}
\bibitem{gentzen1936}
\bibinfo{author}{Gerhard \surnamestart Gentzen\surnameend}
  (\bibinfo{year}{1936}): \emph{\bibinfo{title}{{Die Widerspruchsfreiheit der
  Reinen Zahlentheorie}}}.
\newblock {\slshape \bibinfo{journal}{Mathematische Annalen}}
  \bibinfo{volume}{112}, pp. \bibinfo{pages}{493--565},
  \doi{10.1007/BF01565428}.

\bibitemdeclare{inproceedings}{gilbert}
\bibitem{gilbert}
\bibinfo{author}{Fr{\'{e}}d{\'{e}}ric \surnamestart Gilbert\surnameend}
  (\bibinfo{year}{2017}): \emph{\bibinfo{title}{Automated Constructivization of
  Proofs}}.
\newblock In: {\slshape \bibinfo{booktitle}{{FOSSACS} 2017 - 20th International
  Conference on Foundations of Software Science and Computation Structures}},
  \bibinfo{address}{Uppsala, Sweden}, pp. \bibinfo{pages}{480--495},
  \doi{10.1007/978-3-662-54458-7\_28}.

\bibitemdeclare{article}{glivenko1928}
\bibitem{glivenko1928}
\bibinfo{author}{Valery \surnamestart Glivenko\surnameend}
  (\bibinfo{year}{1928}): \emph{\bibinfo{title}{{Sur quelques points de la
  logique de M. Brouwer}}}.
\newblock {\slshape \bibinfo{journal}{Bulletins de la classe des sciences}}
  \bibinfo{volume}{15}, p. \bibinfo{pages}{183–188}.

\bibitemdeclare{article}{godel1933}
\bibitem{godel1933}
\bibinfo{author}{Kurt \surnamestart Gödel\surnameend} (\bibinfo{year}{1933}):
  \emph{\bibinfo{title}{{Zur intuitionistischen Arithmetik und
  Zahlentheorie}}}.
\newblock {\slshape \bibinfo{journal}{Ergebnisse eines Mathematischen
  Kolloquiums}} \bibinfo{volume}{4}, p. \bibinfo{pages}{34–38}.

\bibitemdeclare{article}{LF}
\bibitem{LF}
\bibinfo{author}{Robert \surnamestart Harper\surnameend},
  \bibinfo{author}{Furio \surnamestart Honsell\surnameend} \&
  \bibinfo{author}{Gordon \surnamestart Plotkin\surnameend}
  (\bibinfo{year}{1993}): \emph{\bibinfo{title}{{A Framework for Defining
  Logics}}}.
\newblock {\slshape \bibinfo{journal}{Journal of the ACM}}
  \bibinfo{volume}{40}(\bibinfo{number}{1}), p. \bibinfo{pages}{143–184},
  \doi{10.1145/138027.138060}.

\bibitemdeclare{inproceedings}{deduktiengine}
\bibitem{deduktiengine}
\bibinfo{author}{Gabriel \surnamestart Hondet\surnameend} \&
  \bibinfo{author}{Fr{\'e}d{\'e}ric \surnamestart Blanqui\surnameend}
  (\bibinfo{year}{2020}): \emph{\bibinfo{title}{{The New Rewriting Engine of
  Dedukti}}}.
\newblock In: {\slshape \bibinfo{booktitle}{{FSCD 2020 - 5th International
  Conference on Formal Structures for Computation and Deduction}}},
  \bibinfo{volume}{167}, \bibinfo{address}{Paris, France},
  p.~\bibinfo{pages}{16}, \doi{10.4230/LIPIcs.FSCD.2020.35}.
\newblock \urlprefix\url{https://inria.hal.science/hal-02981561}.

\bibitemdeclare{phdthesis}{superdeduction_phd}
\bibitem{superdeduction_phd}
\bibinfo{author}{Clément \surnamestart Houtmann\surnameend}
  (\bibinfo{year}{2010}): \emph{\bibinfo{title}{Représentation et interaction
  des preuves en superdéduction modulo}}.
\newblock Ph.D. thesis, \bibinfo{school}{Université Henri Poincaré - Nancy
  I}.
\newblock \urlprefix\url{https://theses.hal.science/tel-00553219}.

\bibitemdeclare{article}{kolmogorov1925}
\bibitem{kolmogorov1925}
\bibinfo{author}{Andrey~Nikolaevich \surnamestart Kolmogorov\surnameend}
  (\bibinfo{year}{1925}): \emph{\bibinfo{title}{O principe tertium non datur}}.
\newblock {\slshape \bibinfo{journal}{Matematicheskiĭ Sbornik}}
  \bibinfo{volume}{32}, p. \bibinfo{pages}{646–667}.

\bibitemdeclare{article}{kuroda1951}
\bibitem{kuroda1951}
\bibinfo{author}{Sigekatu \surnamestart Kuroda\surnameend}
  (\bibinfo{year}{1951}): \emph{\bibinfo{title}{{Intuitionistische
  Untersuchungen der formalistischen Logik}}}.
\newblock {\slshape \bibinfo{journal}{Nagoya Mathematical Journal}}
  \bibinfo{volume}{2}, p. \bibinfo{pages}{35–47},
  \doi{10.1017/S0027763000010023}.

\bibitemdeclare{phdthesis}{saillard}
\bibitem{saillard}
\bibinfo{author}{Ronan \surnamestart Saillard\surnameend}
  (\bibinfo{year}{2015}): \emph{\bibinfo{title}{{Typechecking in the
  lambda-Pi-Calculus Modulo : Theory and Practice}}}.
\newblock Ph.D. thesis, \bibinfo{school}{{Ecole Nationale Sup{\'e}rieure des
  Mines de Paris}}.
\newblock \urlprefix\url{https://pastel.hal.science/tel-01299180}.

\bibitemdeclare{phdthesis}{thire}
\bibitem{thire}
\bibinfo{author}{Fran{\c c}ois \surnamestart Thiré\surnameend}
  (\bibinfo{year}{2020}): \emph{\bibinfo{title}{{Interoperability between proof
  systems using the logical framework Dedukti}}}.
\newblock Ph.D. thesis, \bibinfo{school}{{Universit{\'e} Paris-Saclay}}.
\newblock \urlprefix\url{https://hal.science/tel-03224039}.

\bibitemdeclare{unpublished}{kuroda_hol}
\bibitem{kuroda_hol}
\bibinfo{author}{Thomas \surnamestart Traversié\surnameend}
  (\bibinfo{year}{2024}): \emph{\bibinfo{title}{Kuroda's translation for
  higher-order logic}}.
\newblock \urlprefix\url{https://hal.science/hal-04561757}.
\newblock \bibinfo{note}{Manuscript}.

\end{thebibliography}
\end{document}